\documentclass[11pt]{article}

\usepackage[margin=1in]{geometry}
\usepackage{microtype}
\usepackage{xcolor}

\usepackage{enumitem}
\usepackage{graphicx}
\usepackage{fancybox}
\usepackage{comment}
\usepackage{xcolor}
\usepackage{nameref}
\usepackage{bbm}

\definecolor{ForestGreen}{rgb}{0.1333,0.5451,0.1333}
\definecolor{DarkRed}{rgb}{0.8,0,0}
\definecolor{Red}{rgb}{1,0,0}
\usepackage[linktocpage=true,
colorlinks,
linkcolor=ForestGreen,citecolor=ForestGreen,
bookmarks,bookmarksopen,bookmarksnumbered]
{hyperref}

\usepackage{amsmath}
\usepackage{amssymb}
\usepackage{amsthm}
\usepackage{mathtools}

\usepackage{subfig}

\usepackage{thm-restate}
\usepackage[ruled, noend, linesnumbered]{algorithm2e}

\usepackage{float}
\usepackage{cleveref}

\def\setof#1{\left\{#1  \right\}}

\def\ceil#1{\left\lceil #1 \right\rceil}

\def\dim#1{\mathrm{dim} (#1)}

\def\norm#1{\left\| #1 \right\|}

\def\smallnorm#1{\| #1 \|}

\def\normop#1{\left\| #1 \right\|_{\mathrm{op}}}

\def\OL{\operatorname{OL}}

\def\calA{\mathcal{A}}

\def\calS{\mathcal{S}}

\newcommand\vecone{\boldsymbol{1}}

\newcommand\matzero{\boldsymbol{0}}

\newcommand\poly{\textnormal{poly}}

\newcommand\R{\mathbb{R}}

\newcommand\tr{\mathrm{Tr}}

\DeclareMathOperator*{\diag}{diag}

\newcommand{\polylog}{\text{ polylog }}

\newcommand{\dist}{\textnormal{dist}}

\newcommand{\eps}{\varepsilon}

\newcommand{\pe}{\preceq}

\newcommand{\supp}{\mathrm{supp}}

\newcommand{\vone}{\mathbf{1}}

\renewcommand{\tilde}{\widetilde}

\newcommand{\cF}{\mathcal{F}}

\newcommand{\Otil}{\tilde{O}}

\renewcommand{\epsilon}{\ensuremath\varepsilon}

\renewcommand{\phi}{\ensuremath{\varphi}}

\renewcommand\AA{\boldsymbol{\mathrm{{A}}}}
\newcommand\BB{\boldsymbol{\mathrm{{B}}}}
\newcommand\CC{\boldsymbol{\mathrm{{C}}}}

\newcommand\EE{\boldsymbol{\mathrm{{E}}}}

\newcommand\HH{\boldsymbol{\mathrm{{H}}}}
\newcommand\II{\boldsymbol{\mathrm{{I}}}}

\newcommand\KK{\boldsymbol{\mathrm{{K}}}}
\newcommand\MM{\boldsymbol{\mathrm{{M}}}}

\newcommand\LL{\boldsymbol{\mathrm{{L}}}}

\newcommand\WW{\boldsymbol{\mathrm{{W}}}}
\newcommand\VV{\boldsymbol{\mathrm{{V}}}}
\newcommand\XX{\boldsymbol{\mathrm{{X}}}}
\newcommand\YY{\boldsymbol{\mathrm{{Y}}}}
\newcommand\ZZ{\boldsymbol{\mathrm{{Z}}}}

\renewcommand\aa{\boldsymbol{\mathrm{a}}}
\newcommand\bb{\boldsymbol{\mathrm{b}}}

\newcommand\vv{\boldsymbol{\mathrm{v}}}
\newcommand\ww{\boldsymbol{\mathrm{w}}}
\newcommand\xx{\boldsymbol{\mathrm{x}}}

\SetKw{Break}{break}
\SetKwComment{Comment}{$\triangleright$\ }{}
\SetCommentSty{mycommfont}

\newtheorem{theorem}{Theorem}[section]

\newtheorem{lemma}[theorem]{Lemma}
\newtheorem{observation}[theorem]{Observation}

\newtheorem{definition}[theorem]{Definition}
\newtheorem{remark}[theorem]{Remark}

\newtheorem*{theorem*}{Theorem}

\newtheorem*{corollary*}{Corollary}
\newtheorem*{conjecture*}{Conjecture}
\newtheorem*{lemma*}{Lemma}
\newtheorem*{thm*}{Theorem}
\newtheorem*{prop*}{Proposition}
\newtheorem*{obs*}{Observation}
\newtheorem*{definition*}{Definition}

\newtheorem*{remark*}{Remark}
\newtheorem*{rec*}{Recommendation}

\usepackage[backend=biber, isbn=false, style=alphabetic, backref=true, doi=true, url=false, eprint=false, maxcitenames=10, mincitenames=3, maxbibnames=10, minbibnames=6, minalphanames=3, maxalphanames=5, defernumbers=true]{biblatex}

\usepackage{makecell}
\usepackage{booktabs}
\usepackage{array}
\usepackage{tablefootnote}

\newcommand{\E}{\mathbb E}
\newcommand{\Prb}{\mathbb P}

\title{An Online Sparsification Algorithm from the Book}
\author{
  Gramoz Goranci \\
  \small University of Vienna \\
  \small \texttt{gramoz.goranci@univie.ac.at}
  \and
  Rasmus Kyng\thanks{The research leading to these results has received funding from the starting grant “A New Paradigm for Flow and Cut Algorithms” (no. TMSGI2 218022) and grant no. 200021 204787 of the Swiss National Science Foundation.} \\
  \small ETH Zurich \\
  \small \texttt{kyng@inf.ethz.ch}
  \and
  Maximilian Probst Gutenberg\footnotemark[1] \\
  \small ETH Zurich \\
  \small \texttt{maximilian.probst@inf.ethz.ch}
  \and
  Yibin Zhao \\
  \small University of Toronto \\
  \small \texttt{ybzhao@cs.toronto.edu}
  \and
  Gernot Z\"ocklein\footnotemark[1] \\
  \small ETH Zurich \\
  \small \texttt{gernot.zoecklein@inf.ethz.ch}
}

\date{\today}

\begin{document}
\maketitle

\begin{abstract}
In their seminal paper \cite{cohen2016online}, Cohen, Musco, and Pachocki proposed a natural and simple online spectral sparsification algorithm: rows $\aa_1, \aa_2, \ldots \in \mathbb{R}^d$ of a matrix $\AA$ arrive one-by-one, and when row $\aa_i$ arrives, it is appended to sparsifier $\tilde{\AA}$ (after appropriately reweighting it) with probability proportional to its \emph{current} leverage score
\[
\tau^{\mathrm{OL}}(\aa_i)=\aa_i^\top(\AA_i^\top\AA_i)^\dagger \aa_i,  \text{ where }\AA_i = [\aa_1, \aa_2, \ldots, \aa_i]^\top
\]
or otherwise discarded forever. For oblivious streams, they showed that this maintains a $(1\pm\epsilon)$-spectral approximation $\tilde{\AA}$ of every $\AA$ with $O(d\epsilon^{-2}\log^2 d)$ many rows.

A natural question is whether the same algorithm works for adaptive streams, where each row may depend on the algorithm’s previous random choices. The original proof does not extend directly: it analyzes the process in isotropic position with respect to the final matrix $\AA$, which is not fixed in advance under adaptivity. As an extension of this proof framework remained elusive, various algorithmic variants have since been suggested.

In this paper, we show that the original online leverage-score sampling algorithm is indeed robust to adaptive adversaries. Our main technical contribution is a Freedman-type matrix martingale inequality with an evolving isotropic map, allowing the isotropic map used in the concentration argument to change with the stream.

As a consequence, this gives the first online sparsification algorithm for adaptive streams that yields a sparsifier of near-optimal size $O(d \eps^{-2}\log^2 d)$ whose working memory is proportional to the size of the sparsifier. For the special case of spectral graph sparsification, we provide an implementation that additionally runs in time near-linear in the stream size.
\end{abstract}

\pagebreak

\paragraph{Acknowledgements.}
The initial proof of the matrix concentration  \Cref{lem:fixed-freedman} was generated in a conversation with ChatGPT $5.5$ Pro. The authors have verified
and rewritten the proof to improve readability and presentation.

\section{Introduction}

A fundamental primitive in numerical linear algebra is to compute a small spectral approximation to a tall matrix $\AA \in \R^{n \times d}$. We say that $\Tilde{\AA} \in \R^{k \times d}$ is a $(1\pm \eps)$-spectral approximation to $\AA$ if
$$
(1-\eps)\AA^\top \AA \preceq \Tilde{\AA}^\top \Tilde{\AA} \preceq (1+\eps)\AA^\top \AA .
$$
Equivalently, $\smallnorm{\Tilde{\AA}\xx}_2^2$ approximates $\norm{\AA\xx}_2^2$ for every vector $\xx \in \R^d$. If $k \ll d$, then $\Tilde{\AA}$ can usually be used as a sparse proxy of $\AA$ in downstream computations to improve runtimes and memory usage. Very often, it is additionally desirable that the matrix $\Tilde{\AA}$ is built by re-weighting existing rows of $\AA$. This not only helps with interpretability, but it also respects the structure of the data. In this paper, we are interested in the case where $\Tilde{\AA}$ has to be a re-weighting of $\AA$.

The standard way to find such a spectral approximation $\Tilde{\AA}$ of $\AA$ is by \emph{leverage score sampling}, as introduced in \cite{SpielmanS08}. For a row $\aa_i$, define its leverage score as 
$$\tau(\aa_i) := \aa_i^\top (\AA^\top \AA)^\dag \aa_i.$$
Then, in \cite{SpielmanS08}, it was shown that for some $\rho = O(\eps^{-2} \log d)$, adding every row with probability $p_i = \min\setof{1, \rho \cdot \tau(\aa_i)}$ to $\tilde{\AA}$ scaled by $1/p_i$ yields a $(1 \pm \eps)$-approximate spectral sparsifier $\tilde{\AA}$ of $\AA$ with high probability. Furthermore, the expected number of sampled rows is $O(d \eps^{-2}\log d)$, which is optimal up to the $\log d$ factor.

\paragraph{Online Sparsification.} In this article, we consider the online model where rows $\aa_1, \aa_2, \ldots, \aa_n$ arrive one-by-one. Upon arrival of row $\aa_i$, we have to irrevocably decide whether to keep or discard the row. A natural and simple algorithm for this setting is to define $\AA_{i} = [\aa_1 \; \aa_2 \; \ldots \; \aa_i]^{\top}$, and upon receiving row $\aa_i$ to compute the \emph{current} leverage score
\[
    \tau^{\operatorname{OL}}(\aa_i) := \aa_i^\top (\AA_i^\top \AA_i)^\dag \aa_i,
\]
and to maintain a sparsifier $\tilde{\AA}$ of $\AA$ by successively adding each row $\aa_i$ upon arrival to the sparsifier with probability  $p^{\operatorname{OL}}_i = \min\setof{1, \rho \cdot \tau^{\operatorname{OL}}(\aa_i)}$ scaled by $1/\sqrt{p^{\operatorname{OL}}_i}$. 

Assuming that the insertion sequence of rows $\aa_i$ is prefixed, it is not hard to extend the by-now standard proof by Spielman-Srivastava \cite{SpielmanS08} to show that at any time, $\tilde{\AA}$ is a $(1\pm\epsilon)$-spectral sparsifier of $\AA$, with high probability. As shown in \cite{cohen2016online}, using leverage scores $\tau^{\operatorname{OL}}(\aa_i)$ in lieu of $\tau(\aa_i)$ only increases the number of samples in the final sparsifier to $O(d \eps^{-2}\log^2 d)$ rows, which is optimal in the online setting up to a $\log d$ factor\footnote{For simplicity, we assume in the abstract and introduction that $A$ contains only polynomially bounded integer entries, and that the stream length $n$ is polynomially bounded in the dimension $d$.}.

\paragraph{Robust Online Sparsification.} While prefixed insertion sequences seem natural in many settings, such as reading blocks from memory, there are also many settings where the insertion sequence depends on intermediate outputs of the sparsifier, i.e. where new rows are generated on-the-fly. A streaming service might maintain a sparsifier $\tilde{\AA}$ of a user's data $\AA$ and based on this information make recommendations that then lead to new user data; a navigation system might suggest an uncongested route to many users based on traffic information which might then congest it; etc. 

In this more challenging setting, one requires the algorithm to maintain $\tilde{\AA}$ robustly, i.e. to remain a spectral approximation of $\AA$ even under adversarial updates. Naturally, one might ask whether the above online sparsification algorithm extends to the robust setting. 

Unfortunately, the proof by Spielman-Srivastava \cite{SpielmanS08} can no longer apply here. The key issue is that Spielman-Srivastava argues via Matrix Concentration of the normalized matrix $(\AA^{\top} \AA)^{\dag / 2}\tilde{\AA}^{\top}\tilde{\AA}(\AA^{\top} \AA)^{\dag / 2}$; however, the isotropic map $(\AA^{\top} \AA)^{\dag / 2}$ depends on the final matrix $\AA$, which is itself random under adaptivity. This cannot be handled by standard black box matrix concentration inequalities.

As a result, various new algorithms were proposed recently to maintain a spectral sparsifier $\tilde{\AA}$ in the online, streaming and dynamic graph settings \cite{rowadaptive,kennethmordoch2025adversarialrobustnessonlineimportance,shunhua_leastsquares, BernsteinvdBGNSSS22}.

\paragraph{Our Contribution.} In this article, we give a natural generalization of the underlying Matrix Concentration bounds used by Spielman-Srivastava \cite{SpielmanS08}. As an immediate result, we obtain that indeed, the natural Online Sampling algorithm succeeds even when new rows are adversarially generated on-the-fly.

\begin{theorem}\label{thm:mainThm}
Given $\eps > 0$ and a matrix $\AA \in \mathbb{R}^{n \times d}$ whose rows arrive as an adaptive stream $\aa_1, \aa_2, \ldots, \aa_n$. Consider the algorithm where, upon arrival, row $\aa_i$ is added (after reweighting) to the sparsifier $\tilde{\AA}$ with probability $p_i^{OL}$. Then, w.h.p., for any time $i$, for $\AA_i$ being the prefix of the first $i$ rows of $\AA$ and $\tilde{\AA}_i$ the sparsifier at the time, we have 
    \[ (1- \eps) \AA_i^\top \AA_i \preceq \Tilde{\AA}_i^\top \Tilde{\AA}_i \preceq (1+\eps) \AA_i^\top \AA_i. \]
The sparsifier $\tilde{\AA}$ has at most $O(d \epsilon^{-2} \log^2 d)$ rows at any time. Moreover, the working memory required is proportional to the size of the sparsifier.
\end{theorem}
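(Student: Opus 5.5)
The approach is to control the normalized error process with a matrix martingale whose normalization changes as the stream grows, and then to bound the sparsifier size with the deterministic online-leverage-score sum bound of \cite{cohen2016online}.

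Define the error increments $\XX_j = (\frac{\mathbb{1}[j \text{ sampled}]}{p_j} - 1)\aa_j\aa_j^\top$. These form a martingale difference sequence with respect to the filtration generated by the algorithm's coins and the adversary's choices. The reason is that $\aa_j$ and $p_j^{OL}$ are determined before the coin for row $j$ is flipped. The goal is to show that, w.h.p., for every $i$,
\[
-\eps \AA_i^\top\AA_i \preceq \sum_{j\le i}\XX_j \preceq \eps\AA_i^\top\AA_i .
\]
Because $\AA_i^\top\AA_i$ is random and grows, we cannot normalize by the final matrix. Instead we use an evolving isotropic map. To handle rank changes and the pseudoinverse, set $\MM_i = \AA_i^\top\AA_i + \delta\II$ for a tiny $\delta$; this is justified by polynomially bounded entries. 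We then need a Freedman-type statement of the following form. Suppose $\sum_{j\le i}\XX_j$ is a martingale, each $\XX_j$ satisfies $\MM_j^{-1/2}\XX_j\MM_j^{-1/2}\preceq R\,\II$ in norm, and the predictable quadratic variation satisfies $\sum_{j\le i}\E_{j-1}[\XX_j\MM_j^{-1}\XX_j]\preceq \sigma^2\MM_i$ (or a normalized analogue). Then $\|\MM_i^{-1/2}(\sum_{j\le i}\XX_j)\MM_i^{-1/2}\|\le \eps$ for all $i$ with failure probability $d\exp(-\Omega(\eps^2/(\sigma^2+R\eps)))$ times a $\log$ factor.

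I would prove this via a Lieb/Tropp trace-exponential supermartingale. The natural potential is $\tr\exp(\theta\,\MM_i^{-1/2} \YY_i \MM_i^{-1/2} - \text{variance})$, but the normalization changes. The key step, and the main obstacle, is showing the potential does not increase when $\MM_{i-1}$ is replaced by $\MM_i\succeq\MM_{i-1}$. My first attempt is to exploit monotonicity: since $\MM_i\succeq \MM_{i-1}$, the quantity $\|\MM^{-1/2}\YY\MM^{-1/2}\|$ shrinks at least as much as the normalized variance bound grows. I would couple the deviation and variance as $\YY_i - c\,\MM_i$, using the fact that adding $\aa_i\aa_i^\top$ to $\MM$ only helps the one-sided bound $\YY_i\preceq \eps\MM_i$. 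So I work with the unnormalized one-sided events $\YY_i\preceq\eps\MM_i$ and their negatives. Concretely, I apply a Freedman argument to $\tr\exp(\theta\,\WW^{\top}(\YY_i-\eps \MM_i)\WW)$ for the current normalization $\WW$. If a direct potential argument fails, I would fall back to a dyadic epoch approach: the norm $\MM_i$ can only double $O(d\log(\text{poly}))$ times in any direction. I would freeze a normalization per epoch, apply the fixed-map Freedman bound (the paper's \Cref{lem:fixed-freedman}) within each epoch, and take a union bound. This loses at most log factors, which are absorbable since $n$ and the entries are polynomial.

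Next we plug in the sampling parameters. With $p_j=\min\{1,\rho\,\aa_j^\top \MM_j^{-1}\aa_j\}$, every $j\le i$ gives $\MM_i^{-1/2}\XX_j\MM_i^{-1/2}\preceq \MM_j^{-1/2}\XX_j\MM_j^{-1/2}$ in norm, which is at most $1/\rho$. Rows with $p_j=1$ have $\XX_j=0$. The conditional variance is $\E_{j-1}\XX_j^2$ normalized, which is $\preceq \frac{1}{\rho}\aa_j\aa_j^\top$. Its sum is therefore $\preceq \frac1\rho\MM_i$, so $R,\sigma^2\le 1/\rho$. Choosing $\rho=C\eps^{-2}\log(dn)$, with $\log n=O(\log d)$, gives failure probability $1/\poly$, and a union bound over $i$ gives the bound for all times.

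For the size bound, the expected number of samples is $\sum_j p_j\le \rho\sum_j \tau^{OL}(\aa_j)=\rho\cdot O(d\log(\text{poly}))$ by the online leverage score sum bound of \cite{cohen2016online}, and this bound holds deterministically for every sequence, including adaptive ones. A Freedman scalar bound on $\sum_j(\mathbb{1}[j\text{ sampled}]-p_j)$ then gives $O(d\eps^{-2}\log^2 d)$ w.h.p.

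For memory, we cannot store $\AA_i$. We therefore compute $p_j$ using $\tilde{\AA}_{j-1}^\top\tilde{\AA}_{j-1}+\aa_j\aa_j^\top$, which by the inductive guarantee is a $(1\pm\eps)$-approximation of $\AA_j^\top \AA_j$. Inflating $\rho$ by a constant keeps both the concentration and the size bounds valid, and we store only $\tilde{\AA}$ and a $d\times d$ matrix.
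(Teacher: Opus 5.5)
\textbf{There is a genuine gap: the key evolving-normalization concentration step is never proved.}

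Your reduction is set up correctly. The evolving-normalization Freedman bound you formulate is exactly what is needed: your condition $\sum_{j\le i}\E_{j-1}[\XX_j\MM_j^{-1}\XX_j]\preceq\sigma^2\MM_i$ is the paper's $\VV_i\preceq\sigma^2\II$, because the products $\CC_i\cdots\CC_{j+1}$ telescope to $\KK_i^{\dag/2}\KK_j^{1/2}$. Your bounds $R,\sigma^2\le 1/\rho$ and your size argument also match the paper. But that inequality is the whole content of the theorem, and you do not prove it.

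Worse, the step you target is false. Take $\CC_i=\MM_i^{-1/2}\MM_{i-1}^{1/2}$, so $\CC_i\CC_i^\top\preceq\II$. Renormalizing replaces the exponent $\HH=\theta\EE_{i-1}-g\VV_{i-1}$ by $\CC_i\HH\CC_i^\top$, and $\tr e^{\CC\HH\CC^\top}$ can exceed $\tr e^{\HH}$. For example, with $\HH=-\lambda\II$ and $\CC$ a proper orthogonal projection, negative eigenvalues get pulled up to $0$. So the potential genuinely can increase.

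The missing idea is that the increase is at most an additive $d$. Since $\norm{\CC^\top x}\le\norm{x}$, the map $\HH\mapsto\CC\HH\CC^\top$ cannot increase the number $N(a)$ of eigenvalues above any level $a>0$. Combining this with $\sum_{\lambda>0}(e^{\lambda}-1)=\int_0^\infty e^a N(a)\,da$ gives $\tr e^{\CC\HH\CC^\top}\le d+\tr e^{\HH}$. Then $\Psi_i+d(n-i)$ is a supermartingale, and optional stopping gives the $d(n+1)$ prefactor.

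Your coupling with $\YY_i-\eps\MM_i$ does not touch this issue. Normalized by $\MM_i^{-1/2}$, it is just $\EE_i-\eps\II$, which rescales the potential by the constant $e^{-\theta\eps}$. Unnormalized, the increments $z_j\aa_j\aa_j^\top$ have no uniform bound $R$, since their scale $\norm{\aa_j}^2/p_j$ is arbitrary, so no single $\theta$ works.

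\textbf{The epoch fallback also fails, and not by a log factor.} Within an epoch where $\MM_j\preceq 2\MM_s$, the fixed-map bound controls that epoch's contribution by $\pm\eps'\MM_s$. The union bound, however, only handles failure probabilities; the errors still add across epochs. At time $i$ you only get $\pm\eps'\sum_k\MM_{s_k}$.

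Now take a direction $u$ that never grows while the other $d-1$ directions keep doubling. Then $u^\top\MM_{s_k}u$ is the same in all $\Theta(d\log\kappa^{\OL})$ epochs, so the bound is $\Theta(d\log\kappa^{\OL})\,\eps'\,u^\top\MM_i u$. Equivalently, the contraction does not shrink old normalized error in directions that did not grow, so that error can gain $\eps'$ per epoch. Compensating forces $\rho$ up by a factor $d^2\log^2\kappa^{\OL}$, which destroys the $O(d\eps^{-2}\log^2 d)$ size bound.

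\textbf{Two smaller points.} First, plugging the inductive $(1\pm\eps)$ guarantee into $p_j$ needs a stopping time. The paper stops at the first $i$ with $\tilde{\KK}_i\not\preceq(1+\eps)\KK_i$, because conditioning on a high-probability event breaks the martingale structure. Second, storing a dense $d\times d$ matrix is not memory proportional to the sparsifier when rows are sparse. This is why the paper applies $(\tilde{\AA}_{i-1}^\top\tilde{\AA}_{i-1}+\aa_i\aa_i^\top)^\dagger$ iteratively (e.g.\ by conjugate gradient) without ever forming it.
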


Our algorithm matches the adversarially robust derandomized online BSS algorithm from \cite{braverman2020near} in the size of the sparsifier it produces. Additionally, for matrices $A$ with $k$-sparse rows, the working memory\footnote{We consider the real RAM memory model here.} of our algorithm is $O(k d \epsilon^{-2} \log^2 d)$ in comparison to their $O(d^2 + k d \epsilon^{-2} \log^2 d)$. This is particularly significant for the special case of $\AA$ being the incidence matrix of a graph, which is 2-sparse. Here we show that due to the simplicity of our algorithm, it can be implemented to run in near-linear time with near-optimal working memory --- essentially achieving the best of all worlds.

\begin{theorem}\label{thm:mainThmRuntime}
Given $\eps > 0$ and a matrix $\AA \in \mathbb{R}^{n \times d}$ where $\AA^\top \AA$ is a Graph Laplacian and whose rows arrive as an adaptive stream $\aa_1, \aa_2, \ldots, \aa_n$. Then the above algorithm in \Cref{thm:mainThm} can be implemented to maintain a $(1\pm\eps)$-spectral sparsifier with at most $\tilde{O}(d \eps^{-2})$ rows, while using at most at most $\tilde{O}(n)$ total processing time and at most $\Otil(d \eps^{-2})$ working memory.\footnote{In this paper, we use the notation $f(d) = \Otil(g(d))$ if $f(d) = O(g(d)\cdot\polylog d)$.}
\end{theorem}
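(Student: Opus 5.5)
The statement to prove is \Cref{thm:mainThmRuntime}. The plan is to take the correctness and size guarantees directly from \Cref{thm:mainThm} and to spend all the effort on implementation. The key observation is that the exact online leverage score $\tau^{\OL}(\aa_i)$ need not be computed: it suffices to sample with a probability $\tilde p_i \ge \min\{1,\rho\,\tau^{\OL}(\aa_i)\}$ that is at most a constant factor larger. Accordingly, the first step is to check that the concentration argument behind \Cref{thm:mainThm} tolerates overestimates. I expect it does, since the martingale analysis only needs each sampled increment to have normalized norm at most $1/\rho$, and this is preserved when $p_i$ increases. The sparsifier size bound then grows by at most a constant factor, because $\sum_i \tilde p_i \le O(1)\sum_i \rho\,\tau^{\OL}(\aa_i)$.

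Next, I would estimate $\tau^{\OL}(\aa_i)$ using the current sparsifier instead of $\AA_i$. For an edge $e=(u,v)$, the row $\aa_i$ is $\sqrt{w_e}(\chi_u-\chi_v)$, so $\tau^{\OL}(\aa_i)=w_e R_{\AA_i}(u,v)$, where $R$ denotes effective resistance in the graph with Laplacian $\AA_i^\top\AA_i$. By the guarantee of \Cref{thm:mainThm} at time $i-1$, $\tilde\AA_{i-1}^\top\tilde\AA_{i-1}$ approximates $\AA_{i-1}^\top\AA_{i-1}$ within $1\pm\eps$. Adding the edge $e$ itself to both sides gives $\tilde\AA_{i-1}^\top\tilde\AA_{i-1}+\aa_i\aa_i^\top \approx_{1\pm\eps} \AA_i^\top\AA_i$. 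It therefore suffices to compute a constant-factor approximation of the effective resistance of $(u,v)$ in the graph $H_{i-1}+e$, which has only $\Otil(d\eps^{-2})$ edges. One subtlety is adaptivity: the estimate depends on past randomness. This is harmless, because the martingale bound from \Cref{thm:mainThm} already allows $p_i$ to be any predictable quantity, as long as it dominates the true online score. The event that the estimate may fail (namely when the sparsifier guarantee already failed) can be handled by a stopping-time argument.

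The main obstacle is answering these effective resistance queries in $\polylog$ amortized time while the sparsifier $H$ grows. Recomputing from scratch would take $\Otil(d\eps^{-2})$ time per query, which is too slow. I would first try lazy rebuilding: keep a coarser sparsifier $H'$ of $H$ of size $\Otil(d)$, for example via a low-stretch forest or a spanner-based resistance oracle, and rebuild it only when $H$ has gained enough new edges to change resistances by a constant factor. Alternatively, I would use a dynamic effective-resistance data structure supporting edge insertions with $\polylog$ amortized update and query time and $O(\log n)$ approximation. In that case I would compensate by oversampling by the approximation factor, which costs only a $\polylog$ factor in size, consistent with the $\Otil$ bounds. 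A cleaner concrete choice is a one-pass union-of-spanners estimator: maintain $O(\log d)$ spanners of $H$. If $u$ and $v$ are far apart in a $\log d$-stretch spanner, the resistance is large. Its inverse-distance gives an $O(\log d)$-approximate upper bound on $\tau$, and each insertion can be processed in $\polylog$ time. Memory stays $\Otil(d\eps^{-2})$ because only the sparsifier and these auxiliary structures, which are subgraphs of it, are stored. Total time is $n\cdot\polylog$, which gives the $\Otil(n)$ bound.
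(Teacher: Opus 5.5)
Your reduction step is fine and matches the paper's \Cref{algo:main_algo}. Probabilities may overestimate $\rho\,\tau^{\OL}(\aa_i)$, the proxy $\tilde\AA_{i-1}^\top\tilde\AA_{i-1}+\aa_i\aa_i^\top$ is within $1\pm\eps$ of $\AA_i^\top\AA_i$ before the stopping time, and \Cref{thm:main} absorbs this. The gap is the resistance oracle, which is where all the work lies, and none of your three candidates supplies it.

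The ``concrete'' union-of-spanners estimator rests on a false premise. Path length only \emph{upper}-bounds effective resistance, so $u,v$ being far apart in a spanner says nothing about their resistance being large. Take the stream to be the unit-weight edges of $K_d$. When $(u,v)$ arrives it is not yet adjacent in $H$, so $\dist_{H_j}(u,v)\ge 2$ in every spanner $H_j\subseteq H$. Even granting edge-disjoint spanners and the generous parallel-paths bound $\hat R=\bigl(\sum_{j\le t}1/\dist_{H_j}(u,v)\bigr)^{-1}\ge 2/t$ with $t=O(\log d)$, you get $\hat\tau=\Omega(1/\log d)$. Then $\rho\hat\tau=\Omega(\eps^{-2})$, which exceeds $1$ for small $\eps$. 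So all $\Theta(d^2)$ edges are kept, although the online scores sum to $O(d\log d)$ by \Cref{thm:online_lev_sum}. In general, $t$ disjoint spanners certify at best $\tau\le\alpha/t$. You need $t\gtrsim\alpha\rho$ of them just to get below the threshold $1/\rho$, and even then you get a threshold certificate, not an approximation of each score.

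The other two options are assumptions, not arguments. No incremental effective-resistance oracle with polylogarithmic update/query time and polylogarithmic approximation is given. A randomized one would not obviously survive here, because the adversary sees $\tilde\AA$, which depends on the oracle's answers. Predictability of $p_i$ is not enough: you need $p_i\ge\min\{1,\rho\,\tau^{\OL}(\aa_i)\}$ to hold surely, or on an event you control. Lazy rebuilding has no rebuild rule and no bound on the number of rebuilds or on the total stale overestimate. Tree- or path-based estimates also suffer from the same distance-versus-resistance gap.

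The idea you are missing is the paper's: do not estimate scores edge by edge. I use the graph section's notation here: $m$ stream edges on $|V|$ vertices. Maintain, with a \emph{deterministic} online spanner (greedy spanner on top of deterministic incremental APSP, \Cref{thm:det_spanner}), an online $t$-bundle spanner $B^{(1)}$ with $t=\Theta(\alpha L^2\eps^{-2}\log(|V|Lm))$. By \Cref{lem:bundle-lev}, every arriving edge the bundle rejects surely has leverage score at most $\alpha/t\le 1/(4\rho)$ in the current graph. Keeping it with probability $1/4$ and weight $4$ is therefore valid oversampling for \Cref{thm:main} at accuracy $\eps/(2L)$. This needs no resistance oracle, no reliance on the sparsifier being accurate, and raises no adaptivity issue, since the certificate is deterministic.

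This step discards only a constant fraction of edges, so the survivors are streamed into a bundle $B^{(2)}$ of $G^{(1)}$, and so on for $L=O(\log(m/|V|))$ levels. The per-level guarantees compose as $(1\pm\eps/2L)^L$. The edges surviving all levels number $O(m/4^L+\log m)$ by \Cref{lma:scalar_freedman}. Each insertion touches at most $Lt$ spanners, each with polylogarithmic amortized cost and $\Otil(|V|\log W)$ memory. So the implemented algorithm is a composition of $L$ instances of the oversampling framework, not a single pass with per-edge estimates, and that structural change is what your plan lacks.
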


In \Cref{fig:online-spectral-approximation} we compare the performance of our algorithm to other online and streaming algorithms. In \Cref{fig:online-spectral-sparsification} we compare our implementation of online graph sparsification to other available results.

\begin{table}[H]
\centering
\renewcommand{\arraystretch}{1.12}
\setlength{\tabcolsep}{8pt}
\begin{tabular}{r|c|c|c|c}
Reference
& Online
& Number of rows
& Robust
& Memory ($k$-sparse rows) \\
\hline
\cite{kennethmordoch2025adversarialrobustnessonlineimportance}
& No
& $\Otil(d \epsilon^{-2})$
& Yes
& $\Otil(kd \epsilon^{-2})$
\\
\cite{KPPS}
& No
& $O(d \epsilon^{-2} \log d)$
& No 
& $O(kd \epsilon^{-2} \log d)$
\\
\small{Online BSS} \cite{cohen2016online}
& Yes
& $O(d \epsilon^{-2} \log d)$
& No
& $O(k d \epsilon^{-2} \log^2 d + d^2)$ \\
 \cite{braverman2020near}
& Yes
& $O(d \epsilon^{-2} \log^2 d)$
& Yes
& $O(k d \epsilon^{-2} \log^2 d + d^2)$ \\
\Cref{thm:main2}
& Yes
& $O(d \epsilon^{-2} \log^2 d)$
& Yes
& $O(k d \epsilon^{-2} \log^2 d)$ \\
\end{tabular}
\caption{Comparison of online and streaming algorithms for spectral approximation.}
\label{fig:online-spectral-approximation}
\end{table}

\begin{table}[H]
\centering
\renewcommand{\arraystretch}{1.12}
\setlength{\tabcolsep}{8pt}
\begin{tabular}{r|c|c|c|c|c}
Reference
& Online
& Number of edges
& Robust
& Run-time
& Memory \\
\hline
\cite{cohenaddad2025nearlyspaceoptimalgraphhypergraph}
& No
& $O(n \epsilon^{-2} \poly(\log \log n))$
& No
& $\poly(n)$
& $O(n \epsilon^{-2} \poly(\log \log n))$
\\
\cite{KPPS}
& No
& $O(n \epsilon^{-2} \log n)$
& No
& $\Otil(m)$
& $\Otil(n \epsilon^{-2})$
\\
\cite{kennethmordoch2025adversarialrobustnessonlineimportance}
& No 
& $\Otil(n \epsilon^{-2})$
& Yes
& $\poly(n)$ & 
$\Otil(n \epsilon^{-2})$
\\
\small{Online BSS} \cite{cohen2016online}
& Yes
& $O(n \epsilon^{-2} \log n)$
& No
& $\Omega(n^2)$
& $\Omega(n^2)$
\\
\cite{braverman2020near}
& Yes
& $O(n \epsilon^{-2} \log^2 n )$
& Yes
& $\Omega(n^{2})$ 
& $\Omega(n^2)$
\\
\cite{AbrahamDKKP16}
& Yes
& $\Otil(n \epsilon^{-2})$
& No
& $\Otil(m)$
& $\Otil(n \epsilon^{-2})$ \\
\Cref{thm:graph-sparsifier}
& Yes
& $\Otil(n \epsilon^{-2})$
& Yes
& $\Otil(m)$
& $\Otil(n \epsilon^{-2})$ \\
\end{tabular}
\caption{Comparison of online and streaming algorithms for spectral graph sparsification.}
\label{fig:online-spectral-sparsification}
\end{table}


\paragraph{An Independent Work.} An independent proof of the robustness of leverage scores sampling was recently posted publicly on GitHub \cite{peng2026leverage}.

\section{Preliminaries}
\paragraph{Vectors and Matrices.}
For matrix $\MM \in \R^{k \times d}$, we denote by $\MM^\dag$ its Moore-Penrose pseudo-inverse.
We often denote $\AA \in \R^{n \times d}$ as an input matrix and $\aa_i \in \R^d$ by its rows as column vectors for $i \in [n]$.
For $i \in [n]$, we use $\AA_i$ to denote the $i \times d$ matrix obtained by the first $i$ rows of $\AA$.
We denote by $\supp(\cdot)$ the support of vector.

We use $\pe$ to denote the Loewner partial order of symmetric matrices.
For a symmetric matrix $\MM \in \R^{d \times d}$, we let $\lambda_i(\MM)$ denote the $i$th smallest eigenvalue of $\MM$, so $\lambda_1(\MM) \le \lambda_2(\MM) \le \ldots \le \lambda_d(\MM)$.
We also use $\lambda_{\max}(\MM)$ and $\lambda_{\min}(\MM)$ to denote the largest and smallest eigenvalues respectively.
We use $\normop{\cdot}$ to denote the operator norm.

\paragraph{Graphs.}
We denote an undirected weighted graph by $G=(V,E,\ww)$ with vertices $V$, edges $E$, and edge weights $\ww \in \R_{\ge 0}^E$.
We say $H$ is a subgraph of $G$ if the edges and vertices of $H$ are subsets of the edges and vertices of $G$.

For an undirected (possibly weighted) graph $G$, we say a subgraph $H$ is an $\alpha$-spanner of $G$ if for every pair of vertices $u,v \in V$, it holds that $d_H(u,v) \le \alpha \cdot d_G(u,v)$ in the shortest path metric $d$.

\paragraph{Graph matrices.}
For an undirected weighted graph $G=(V,E,\ww)$, we let $\BB \in \{-1,0,1\}^{E \times V}$ be its edge-vertex incident matrix with an arbitrarily fixed orientation of the edges.
For an edge $e=(u,v)$ with orientation from $u$ to $v$, we define $\bb_e = \bb_{uv} = \vone_u - \vone_v$.
When clear from context that $\ww$ are edge weights, we let $\WW = \diag(\ww)$ be the diagonal matrix in $\R_{\ge 0}^{E \times E}$.
Note that $\bb_e$ corresponds to the row of $\BB$ labelled by $e \in E$.
The Laplacian matrix of $G$ is defined by $\LL_G = \BB^\top \WW \BB$.

\paragraph{Concentration inequalities.}
We state the following (slightly modified version) of a scalar Freedman inequality. A proof can straightforwardly be obtained by slightly modifying the standard proof of Freedman's inequality. For a proof of a generalization of this result, see Theorem $1$ in \cite{chernoff_time_unif}.
\begin{lemma}[\cite{freedman}]\label{lma:scalar_freedman}
Let $(\mathcal F_i)_{i=1}^n$ be a filtration, and let
$X_1,\dots,X_n$ be $\{0,1\}$-valued random variables such that
$X_i$ is $\mathcal F_i$-measurable. Define
\[
    p_i=\mathbb E[X_i\mid \mathcal F_{i-1}],
    \qquad
    S_k=\sum_{i=1}^k X_i,
    \qquad
    \mu_k=\sum_{i=1}^k p_i .
\]
Then, for every $x>0$,
\[
     \Pr\left[
        \exists k\le n:
        S_k-\mu_k \ge \mu_k+x
    \right]
    \le e^{-x}.
\]
In particular,
\[
    \Pr\left[
        S_n \ge 2\mu_n+x
    \right]
    \le e^{-x}.
\]
\end{lemma}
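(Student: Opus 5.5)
We prove the statement with the standard exponential supermartingale argument behind Freedman's inequality, plus a stopping time to get the uniform-in-time version. Fix $\lambda = 1$ and, for each $k$, define
\[
    M_k = \exp\Big( S_k - (e-1)\mu_k \Big).
\]
The plan is to show that $(M_k)_{k\ge 0}$, with $M_0=1$, is a nonnegative supermartingale with respect to $(\mathcal F_k)$. Since $p_k$ is $\mathcal F_{k-1}$-measurable and $X_k\in\{0,1\}$, we have
\[
    \mathbb E[e^{X_k}\mid \mathcal F_{k-1}] = 1 + (e-1)p_k \le e^{(e-1)p_k},
\]
using $1+y\le e^y$. Hence $\mathbb E[M_k \mid \mathcal F_{k-1}] \le M_{k-1}$.

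\textbf{Step 1: handling the constant.} The target event is $S_k \ge 2\mu_k + x$, which has coefficient $2$ rather than $e-1\approx 1.718$. This works in our favour: on the event, $S_k - (e-1)\mu_k \ge S_k - 2\mu_k \ge x$, because $\mu_k \ge 0$ and $e-1 \le 2$. So the event $\{\exists k\le n: S_k\ge 2\mu_k+x\}$ is contained in $\{\exists k \le n : M_k \ge e^{x}\}$.

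\textbf{Step 2: uniformity in time.} We apply Ville's maximal inequality for nonnegative supermartingales:
\[
    \Pr[\exists k\le n: M_k\ge e^x] \le \frac{\mathbb E[M_0]}{e^{x}} = e^{-x}.
\]
If one prefers to avoid citing Ville, let $\tau$ be the first $k$ with $M_k\ge e^x$, capped at $n$. Optional stopping for bounded stopping times gives $\mathbb E[M_\tau]\le 1$, and Markov's inequality applied to $M_\tau$ then yields the same bound. The "in particular" statement is the special case $k=n$.

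\textbf{Main obstacle.} There is no real obstacle here. The only points needing care are that $p_k$ is predictable, which is what makes the one-step bound valid, and that the event uses the slack factor $2\ge e-1$, which absorbs the moment generating function constant at $\lambda=1$. If a sharper constant were wanted, one would optimize over $\lambda$, but this is unnecessary for the stated bound.
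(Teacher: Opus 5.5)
Your proof is correct. The one-step bound $\E[e^{X_k}\mid\mathcal F_{k-1}] = 1+(e-1)p_k \le e^{(e-1)p_k}$ makes $M_k$ a nonnegative supermartingale, the slack $e-1\le 2$ absorbs the constant, and optional stopping at the first bad index followed by Markov gives the time-uniform bound. This is exactly the ``slightly modified standard Freedman proof'' the paper alludes to without writing out (it cites \cite{chernoff_time_unif}), and it mirrors the potential/optional-stopping/Markov argument the paper uses for its matrix inequality in \Cref{lem:fixed-freedman}.
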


\paragraph{Misc.} We denote by $i \land T = \min \setof{i, T}$. We assume $n \geq d$, and $n \geq 2$.

\section{Adaptive Online Row Sampling}
\subsection{Main Theorem \& Algorithm}
In this section, we provide the pseudo-code of the online row sampling algorithm and state our main theorems. In \Cref{algo:main_algo}, we provide the general algorithmic framework that yields spectral approximations against an adaptive adversary. We then proceed to show how to get very sparse approximations using memory proportional to the size of the approximation. Later, in \Cref{sec:graphs}, we additionally provide a fast implementation for the special case of spectral graph sparsification. 

Additionally, there are two cases we will have to distinguish, depending on the assumptions on the data stream. The first case is the general one, where we have no restrictions on the incoming rows $\aa_1, \aa_2, \ldots, \aa_n \in \R^d$. In this case, the final number of sampled rows scales proportional to $\log \kappa^{\operatorname{OL}}$, a type of online condition number defined as
$$\kappa^{\operatorname{OL}} := \norm{\AA_n}_{\mathrm{op}} \cdot (\max_{i=1}^n \smallnorm{\AA_i^{\dag}}_{\mathrm{op}}).$$
This condition number can be very large, and it can be shown that a dependence of $\log \kappa^{\operatorname{OL}}$ in the number of samples is necessary for online sparsification \cite{cohen2016online}. Fortunately, in \cite{woodruff2022high}, it was shown that if one restricts the rows to have values that are polynomially bounded integers, then one can replace the $\log \kappa^{\operatorname{OL}}$ factor by a simple $\log n$ factor. 

Before we state our theorems, we first state the following bound on the online leverage scores.
\begin{theorem}[See \cite{cohen2016online} and \cite{woodruff2022high}]\label{thm:online_lev_sum}
    Let $\AA \in \mathbb{R}^{n \times d}$. Then
    \[ 
    \sum_{i=1}^n \tau^{\OL}(\aa_i) = O(d \log \kappa^{\operatorname{OL}}).
    \]
    Furthermore, if $\AA$ is assumed to have polynomially bounded integer entries, the above sum is instead bounded by $O(d \log n)$.
\end{theorem}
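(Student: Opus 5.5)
We prove the bound on $\sum_i \tau^{\OL}(\aa_i)$ by the standard matrix determinant lemma potential argument, tracking $\log\det$ of a regularized Gram matrix.

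\textbf{Per-step bound.} Set $\MM_i = \AA_i^\top \AA_i$. First we reduce to steps that do not increase the rank. If $\aa_i \notin \mathrm{range}(\MM_{i-1})$, then $\tau^{\OL}(\aa_i)=1$; in fact $\aa_i$ has a component outside the previous range, so its leverage in $\MM_i$ is exactly $1$. This happens at most $d$ times and contributes at most $d$. Otherwise, the matrix determinant lemma on the range $V$ of $\MM_{i-1}$ (so $\MM_i = \MM_{i-1}+\aa_i\aa_i^\top$ restricted to $V$) gives
\[
\det\nolimits_V(\MM_i) = \det\nolimits_V(\MM_{i-1})\,\bigl(1+\aa_i^\top \MM_{i-1}^\dagger \aa_i\bigr).
\]
Moreover $\tau^{\OL}(\aa_i)=\frac{u}{1+u}$ with $u=\aa_i^\top\MM_{i-1}^\dagger\aa_i$ (Sherman--Morrison). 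Since $\frac{u}{1+u}\le \log(1+u)$ for $u \ge 0$, we get
\[
\tau^{\OL}(\aa_i) \le \log\det\nolimits_V(\MM_i)-\log\det\nolimits_V(\MM_{i-1}).
\]

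\textbf{Telescoping across rank phases.} Split the stream into at most $d+1$ phases of constant rank $r$. Within a phase the sum telescopes to
\[
\log\frac{\prod_j \lambda_j(\text{end})}{\prod_j \lambda_j(\text{start})}
\]
over the $r$ nonzero eigenvalues. Each end eigenvalue is at most $\norm{\AA_n}_{\mathrm{op}}^2$. Each start eigenvalue is at least $\smallnorm{\AA_i^\dagger}_{\mathrm{op}}^{-2}$ for the starting index $i$. Hence each phase contributes at most $2r\log\kappa^{\OL}\le 2d\log\kappa^{\OL}$. This naive count gives $O(d^2\log\kappa^{\OL})$, so the main obstacle is to avoid paying $d$ per phase.

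To fix this, apply the potential argument eigen-direction-wise. Equivalently, use a single potential $\Phi_i=\sum_{j\le \mathrm{rank}(\MM_i)}\log(\lambda_j^+(\MM_i)/\lambda_0)$ with $\lambda_0=\min_i \smallnorm{\AA_i^\dagger}_{\mathrm{op}}^{-2}$. Then:
\begin{itemize}
\item A rank-preserving step increases $\Phi$ by at least $\tau^{\OL}(\aa_i)$, by the interlacing/determinant argument above.
\item A rank-increasing step does not decrease $\Phi$. The old nonzero eigenvalues can only grow by interlacing, and the new term is $\log(\lambda/\lambda_0)\ge 0$ because every nonzero eigenvalue of $\MM_i$ is at least $\lambda_0$.
\end{itemize}
Since $\Phi_n\le d\log(\norm{\AA_n}_{\mathrm{op}}^2/\lambda_0)=2d\log\kappa^{\OL}$, the total is $d + 2d\log\kappa^{\OL} = O(d\log\kappa^{\OL})$.

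\textbf{Integer case.} For polynomially bounded integer entries, bound $\kappa^{\OL}$ directly. We have $\norm{\AA_n}_{\mathrm{op}}\le \poly(n)$. For the smallest nonzero eigenvalue of an integer Gram matrix $\MM_i$ of rank $r$, the product of its nonzero eigenvalues is a sum of squared $r\times r$ minors, which is a positive integer and hence at least $1$. Since all eigenvalues are at most $\poly(n)$, the smallest nonzero eigenvalue is at least $\poly(n)^{-d}$, which only yields $\log\kappa^{\OL}=O(d\log n)$.

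Instead, plug this determinant fact into the potential directly. Let $\Phi'_i=\log\det_+(\MM_i)$, the sum of the logs of the nonzero eigenvalues. This is at least $0$ always by integrality and at most $d\log\poly(n)$. It is nondecreasing: at a rank increase, $\det_+$ remains at least $1$ and is in fact nondecreasing by a Cauchy--Binet argument. At rank-preserving steps it increases by at least $\tau^{\OL}(\aa_i)$. Hence the sum is $d+O(d\log n)$. The delicate point is the monotonicity of $\det_+$ at rank-increase steps. We would verify it via Cauchy--Binet: $\det_+(\MM_i)$ equals the sum of squared maximal minors, and adding a row that raises the rank yields new nonzero minors whose squares are integers at least $1$.
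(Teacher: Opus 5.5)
Your general-case argument is correct: normalizing by $\lambda_0$ makes the newly created eigenvalue's term nonnegative at a rank increase, and Weyl monotonicity handles the old eigenvalues, giving $\sum_i \tau^{\OL}(\aa_i)\le d+2d\log\kappa^{\OL}$. The paper gives no proof of this theorem; it only defers to the cited works, and your argument is the standard determinant-potential proof.

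\textbf{The gap is in the integer case.} The claim that $\det_+$ is nondecreasing at rank-increasing steps is false, even for integer matrices. Let $\aa_i\notin\mathrm{range}(\MM_{i-1})$ and let $\aa_i^{\perp}\neq 0$ be its component orthogonal to that range. A Schur-complement computation gives exactly $\det_+(\MM_i)=\norm{\aa_i^{\perp}}_2^2\,\det_+(\MM_{i-1})$, and $\norm{\aa_i^\perp}_2^2$ can be smaller than $1$.

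For example, take $\AA_{i-1}=[1\ \ 1]$ and $\aa_i=(1,0)^\top$. Then $\det_+(\MM_{i-1})=2$ but $\det_+(\MM_i)=\det\bigl(\begin{smallmatrix}2&1\\1&1\end{smallmatrix}\bigr)=1$.

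Your Cauchy--Binet remark only shows $\det_+(\MM_i)\ge 1$. It compares the new $(r+1)$-minors with the constant $1$, not with the $r$-minors that make up $\det_+(\MM_{i-1})$. This is also why you cannot take $\lambda_0=1$ in your general argument: the new smallest nonzero eigenvalue, here $(3-\sqrt5)/2$, can be below $1$.

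Without monotonicity, your potential only tells you that $\Phi'$ stays in $[0,O(d\log n)]$. So the rank-preserving increments within each of the up to $d+1$ phases sum to $O(d\log n)$, which gives back the $O(d^2\log n)$ bound you were trying to avoid.

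\textbf{The fix} is to control the drops globally rather than step by step. Let $i_1<\dots<i_r$ be the rank-increasing indices and $\BB=[\aa_{i_1},\dots,\aa_{i_r}]^\top$. Every other row lies in the span of the earlier rank-increasing rows. Hence $\aa_{i_k}^\perp$ is exactly the Gram--Schmidt residual of $\aa_{i_k}$ against $\aa_{i_1},\dots,\aa_{i_{k-1}}$, and
\[
\prod_k\norm{\aa_{i_k}^\perp}_2^2=\det(\BB\BB^\top)=\sum_{|S|=r}\det(\BB_{:,S})^2,
\]
which is a positive integer. At rank-preserving steps you already have $\det_+(\MM_i)=(1+u_i)\det_+(\MM_{i-1})$ with $u_i=\aa_i^\top\MM_{i-1}^\dagger\aa_i$. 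Together these give the exact identity
\[
\det\nolimits_+(\MM_n)=\det(\BB\BB^\top)\prod_{i\ \text{rank-preserving}}(1+u_i).
\]
Therefore
\[
\sum_{i\ \text{rank-preserving}}\tau^{\OL}(\aa_i)\le\sum\log(1+u_i)\le\log\det\nolimits_+(\MM_n)\le d\log\lambda_{\max}(\MM_n)=O(d\log n).
\]
Adding $d$ for the rank-increasing rows gives the claimed $O(d\log n)$.

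The missing idea is this: individual factors $\norm{\aa_{i_k}^\perp}_2^2$ may be below $1$, but integrality of the Gram determinant of the rank-increasing rows keeps their product at least $1$.
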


We can now state the main theorem, which shows that over-sampling according to online leverage scores yields a spectral approximation with high probability. Notice that we do not actually sample according to $\tau^{\operatorname{OL}}(\aa_i)$ but rather to an estimate of it that uses the current random sample $\Tilde{\AA}_{i-1}$. We provide pseudo-code of the algorithm in Algorithm \ref{algo:main_algo}. We will prove the theorem below later in \Cref{sec:thm_proof}.

\begin{theorem}\label{thm:main}
    Given $\eps > 0$ and a matrix $\AA \in \mathbb{R}^{n \times d}$, whose rows arrive in a stream $\aa_1, \aa_2, \ldots, \aa_n$. Then, the matrices $\Tilde{\AA}_i$ produced by \Cref{algo:main_algo} satisfy that for all $i$,   
    \[ (1- \eps) \AA_i^\top \AA_i \preceq \Tilde{\AA}_i^\top \Tilde{\AA}_i \preceq (1+\eps) \AA_i^\top \AA_i \]
    with probability at least $1 - 2n^{-1}$. 
\end{theorem}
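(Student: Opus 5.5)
We prove \Cref{thm:main} with a stopping-time argument and a matrix Freedman inequality whose normalization changes along the stream. Let $\MM_i := \AA_i^\top \AA_i$ and $\tilde{\MM}_i := \tilde{\AA}_i^\top \tilde{\AA}_i$. Let $T$ be the first index $i$ at which $\tilde{\MM}_i$ fails to be a $(1\pm\eps)$-approximation of $\MM_i$. Before time $T$ the estimate of the online leverage score computed from $\tilde{\AA}_{i-1}$ (together with $\aa_i$) is within a constant factor of $\tau^{\OL}(\aa_i)$. So, after adjusting constants in $\rho$, the sampling probability $p_i$ used by \Cref{algo:main_algo} satisfies $p_i \ge \min\{1, \rho\,\tau^{\OL}(\aa_i)\}$ with $\rho = C\eps^{-2}\log n$. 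It therefore suffices to show that the stopped process $\YY_i := \tilde{\MM}_{i\land T} - \MM_{i \land T}$ stays in the band $-\eps \MM_{i\land T} \preceq \YY_i \preceq \eps \MM_{i\land T}$ for all $i$ with probability $1-2n^{-1}$. Since $T$ is by definition the first violation, this band bound forces $T=\infty$.

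Write $\YY_i = \sum_{j\le i} \XX_j$ with $\XX_j = (\xi_j/p_j - 1)\aa_j\aa_j^\top$, where $\xi_j$ is the sampling indicator. Then $\E[\XX_j\mid\mathcal F_{j-1}] = 0$, since $\aa_j$ and $p_j$ are $\mathcal F_{j-1}$-measurable even under adaptivity.

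The key step is a Freedman-type inequality for this martingale, normalized at each time $i$ by the \emph{current} isotropic map $\MM_i^{\dag/2}$ rather than a fixed final one. Let $\Phi_i := \MM_i^{\dag/2}\YY_i\MM_i^{\dag/2}$. The bounds we need are the following.
\begin{itemize}
\item \emph{Per-step size.} Each normalized increment satisfies $\normop{\MM_j^{\dag/2}\XX_j\MM_j^{\dag/2}} \le \tau^{\OL}(\aa_j)/p_j \le 1/\rho$.
\item \emph{Variance.} The conditional variance satisfies $\E[(\MM_j^{\dag/2}\XX_j\MM_j^{\dag/2})^2\mid\mathcal F_{j-1}] \preceq \rho^{-1}\MM_j^{\dag/2}\aa_j\aa_j^\top\MM_j^{\dag/2}$.
\item \emph{Monotone normalization.} Since $\MM_{i-1}\preceq \MM_i$, we have $\MM_i^{\dag/2}\MM_{i-1}\MM_i^{\dag/2}\preceq \II$, so renormalizing past terms at a later time can only shrink them in the Loewner order.
\end{itemize}
I would run a Lieb/Tropp trace-exponential supermartingale argument on $\tr\exp\bigl(\theta\Phi_i - g(\theta)\rho^{-1}\MM_i^{\dag/2}\MM_i\MM_i^{\dag/2}\bigr)$. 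The accumulated variance term is then at most $\rho^{-1}\Pi_i$, where $\Pi_i$ is the projection onto the range of $\MM_i$.

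Once this inequality holds, take $\theta \approx \eps\rho$. A union bound over the $n$ times and over the sign gives failure probability $2d\,n\,e^{-c\eps^2\rho} \le 2n^{-1}$ for $C$ large, using $n \ge d$.

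The main obstacle is the changing normalization: $\Phi_i$ is not a martingale, because $\MM_i^{\dag/2}$ rotates and rescales the past sum. I would first try to prove the supermartingale step directly. Write
\[
\Phi_i = \MM_i^{\dag/2}\MM_{i-1}^{1/2}\,\Phi_{i-1}\,\MM_{i-1}^{1/2}\MM_i^{\dag/2} + (\text{new increment}).
\]
Here the conjugating operator $\MM_i^{\dag/2}\MM_{i-1}^{1/2}$ is a contraction. I would then use the operator monotonicity of $\log$ together with $\tr\exp$ monotonicity to argue that the exponent's trace-exponential does not increase under this contraction. This works cleanly only on the positive part, so I would apply it to the upper tail and the lower tail separately.

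If contraction-monotonicity of $\tr\exp$ fails, the fallback is to drop $\tr\exp$ and use a dyadic epoch argument. The scales of $\MM_i$ change by a factor of $2$ only $O(d\log n)$ times, by the polynomial-boundedness assumption from \Cref{thm:online_lev_sum}. Within each epoch a fixed normalization is accurate up to a factor of $2$, so standard matrix Freedman applies, at the cost of an extra log factor in the union bound.
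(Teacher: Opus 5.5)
\textbf{The gap is in the contraction step.} Your setup matches the paper's. Your $\Phi_i$ is the paper's $\EE_i$, and your recursion with $\CC_i=\MM_i^{\dag/2}\MM_{i-1}^{1/2}$ is \eqref{eq:E-recursion}. Your compensator $\rho^{-1}\MM_i^{\dag/2}\MM_i\MM_i^{\dag/2}$ is the paper's $\VV_i$, since $\CC_i\CC_i^\top+\MM_i^{\dag/2}\aa_i\aa_i^\top\MM_i^{\dag/2}=\Pi_i$. The gap sits exactly at the step you flag, which is the paper's main new ingredient.

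Your claim that the trace exponential does not increase under $\HH\mapsto\CC_i\HH\CC_i^\top$ is false, and the tools you name cannot give it. $\CC\HH\CC^\top$ and $\HH$ are in general not Loewner-comparable, even in this application, since $\CC_i$ rotates. So $\tr\exp$-monotonicity has nothing to act on, and ``renormalizing past terms can only shrink them in the Loewner order'' is also false. Operator monotonicity of $\log$ does not help either. Already for $d=1$, taking $\CC=c$ with $c^2<1$ and $\HH=h<0$ gives $e^{c^2h}>e^{h}$. Negative eigenvalues are unavoidable: in either tail the exponent $\pm\theta\Phi_{i-1}-g(\theta)\rho^{-1}\Pi_{i-1}$ contains the negative compensator, so treating the tails separately does not help.

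What is true is \Cref{lma:contraction-trace}. By Courant--Fischer, $\CC\HH\CC^\top$ has at most as many eigenvalues above any $a>0$ as $\HH$, hence $\tr\exp(\CC\HH\CC^\top)\le d+\tr\exp(\HH)$. So your potential is a supermartingale only up to an additive $d$ per step, and you must absorb this loss. The paper does so by making $\Psi_i+d(n-i)$ a supermartingale and stopping at the first bad index, which gives the prefactor $d(n+1)$ uniformly over all times.

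Your ``$2dn$ by a union bound over times'' silently assumes $\E\Psi_i\le d$ at each fixed $i$. With the $+d$ loss, a per-time bound is $d(i+1)e^{-\frac38\eps^2\rho}$. Summing over $i\le n$ then gives order $dn^2e^{-\frac38\eps^2\rho}$, which with the algorithm's fixed $\rho=8\eps^{-2}\log n$ is of order $d/n$ rather than $2n^{-1}$. You also may not enlarge $\rho$, because the statement is about \Cref{algo:main_algo} as written. You do not need to: its factor $1+\eps$ already gives $p_i\ge\min\{1,\rho\,\tau^{\OL}(\aa_i)\}$ whenever $\tilde{\MM}_{i-1}\pe(1+\eps)\MM_{i-1}$, which is the only side needed before the stopping time.

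\textbf{The fallback does not repair this.} First, \Cref{thm:main} assumes no polynomial boundedness; the paper uses that only for the row count in \Cref{thm:main2}. So there is no $O(d\log n)$ bound on the number of epochs.

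Second, let epoch $k$ start at index $s_k$ and let its error be $\Delta_k$. A fixed-normalization Freedman bound controls $\Delta_k$ only relative to the starting matrix, $-\eps'\MM_{s_k}\pe\Delta_k\pe\eps'\MM_{s_k}$. In directions where $\MM$ does not grow, nothing damps earlier errors. Combining epochs therefore costs a factor equal to the number of epochs in $\eps$, hence its square in $\rho$, not just a logarithm in the union bound.

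The paper avoids this because contraction frees exactly the variance room the new increment consumes, so $\VV_i\pe\rho^{-1}\II$ for all $i$. That structure is only exploitable through the contraction trace inequality with the additive $d$ and the shifted supermartingale built on it.
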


\begin{algorithm}[H]
\caption{$\textsc{OnlineSample}(\AA, \eps)$}
\label{algo:main_algo}
\textbf{Input}: Adversarial stream $\aa_1, \aa_2, \ldots, \aa_n \in \R^d$ of rows. \\
\textbf{Output}: $(1 \pm \eps)$-spectral approximation $\Tilde{\AA}_i$ of $\AA_i$ \\ 
$\rho \gets 8 \eps^{-2} \log n$;  // If $n$ is unknown, any over-estimate suffices \\
$\Tilde{\AA}_0 \gets 0 \in \mathbb{R}^{0 \times d}$; \\
\For{$i = 1, 2, \ldots, n$}{
    Let $p_i \geq \min \setof{1, \rho \cdot (1+\eps) \cdot \aa_i^\top (\Tilde{\AA}_{i-1}^\top \Tilde{\AA}_{i-1} + \aa_i \aa_i^{\top})^\dag \aa_i}$ \label{algo:main_algo:line6}\;
    $
    \Tilde{\AA}_i \gets \begin{cases}
        \begin{bmatrix}
            \Tilde{\AA}_{i-1} \\
            \aa_i / \sqrt{p_i} 
        \end{bmatrix} & \textnormal{with probability } p_i \\
        \Tilde{\AA}_{i-1} & \textnormal{else}
    \end{cases}
    $
}
\end{algorithm} 
\begin{remark}
    We note that in Line $6$ of \Cref{algo:main_algo:line6} one can also instead directly choose sampling probabilities $p_i \geq \rho \tau_i^{\operatorname{OL}}(\aa_i)$ if available. 
\end{remark}

Using the above two theorems, we can straightforwardly show that by setting 
$$p_i = \min \setof{1, \rho \cdot (1+\eps) \cdot \aa_i^\top (\Tilde{\AA}_{i-1}^\top \Tilde{\AA}_{i-1} + \aa_i \aa_i^{\top})^\dag \aa_i},$$ the resulting matrices $\Tilde{\AA}_i$ will also be very sparse with high probability.

\begin{theorem}\label{thm:main2}
    Given $\eps > 0$ and a matrix $\AA \in \mathbb{R}^{n \times d}$, whose rows arrive in a stream $\aa_1, \aa_2, \ldots, \aa_n$, there is an adversarially robust online algorithm such that for all $i$, 
    \[ (1- \eps) \AA_i^\top \AA_i \preceq \Tilde{\AA}_i^\top \Tilde{\AA}_i \preceq (1+\eps) \AA_i^\top \AA_i. \]
    The algorithm samples $O(d \eps^{-2} \log n \log \kappa^{\operatorname{OL}})$ many rows. It is correct with probability at least $1-3n^{-1}$, and requires memory proportional to the size of the approximation.

    Moreover, if we assume that $A$ only has polynomially bounded integer entries, one may replace the $\log \kappa^{\operatorname{OL}}$ factors above by $\log n$. 
\end{theorem}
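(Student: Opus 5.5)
We will run \Cref{algo:main_algo} with the minimal admissible choice
\[
p_i = \min \setof{1, \rho (1+\eps)\, \aa_i^\top (\Tilde{\AA}_{i-1}^\top \Tilde{\AA}_{i-1} + \aa_i \aa_i^{\top})^\dag \aa_i}.
\]
Correctness, with probability at least $1-2n^{-1}$, is then exactly \Cref{thm:main}. The remaining work is to bound the number of sampled rows, to union-bound the failure events, and to account for memory.

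\textbf{Step 1: the estimate is at most a constant multiple of the online leverage score.} Condition on the event $\mathcal{E}$ of \Cref{thm:main}. On $\mathcal{E}$ we have $\Tilde{\AA}_{i-1}^\top\Tilde{\AA}_{i-1} \succeq (1-\eps)\AA_{i-1}^\top\AA_{i-1}$. Adding $\aa_i\aa_i^\top \succeq (1-\eps)\aa_i\aa_i^\top$ gives
\[
\Tilde{\AA}_{i-1}^\top\Tilde{\AA}_{i-1}+\aa_i\aa_i^\top \succeq (1-\eps)\AA_i^\top\AA_i .
\]
Both matrices have the same kernel, because the column space of $\Tilde{\AA}_{i-1}^\top$ lies inside that of $\AA_{i-1}^\top$ and the spectral approximation forces equality. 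Moreover, $\aa_i$ lies in the range of both. Pseudo-inversion therefore reverses the Loewner order on this common range, so
\[
p_i \le \rho\,\tfrac{1+\eps}{1-\eps}\,\tau^{\OL}(\aa_i) = O(\rho\, \tau^{\OL}(\aa_i))
\]
for $\eps\le 1/2$.

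\textbf{Step 2: concentration of the sample count despite adaptivity.} Let $X_i$ be the indicator that row $i$ is kept, and let $\mathcal F_i$ be the history, including the adversary's choices. Then $\E[X_i\mid\mathcal F_{i-1}]=p_i$ holds even against an adaptive adversary, since $\aa_i$ and $p_i$ are determined before the coin is flipped. The issue is that Step 1 only bounds $p_i$ on $\mathcal{E}$, and $\mathcal{E}$ depends on the future. To handle this, define a stopping time $T$: the first $i$ at which $\Tilde{\AA}_i$ fails to be a $(1\pm\eps)$-approximation of $\AA_i$. Then apply \Cref{lma:scalar_freedman} to the stopped indicators $X_{i\wedge T}$, or equivalently $X_i\mathbbm{1}[i\le T]$. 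For $i\le T$ the previous prefix is good, so Step 1 applies to $p_i$.

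Combining this with \Cref{thm:online_lev_sum} gives
\[
\mu_n \le O(\rho\, d\log\kappa^{\OL}) = O(d\eps^{-2}\log n\log\kappa^{\OL}).
\]
Taking $x=\log n$, with probability at least $1-n^{-1}$ the stopped count is at most $2\mu_n+\log n$. On $\mathcal{E}$ we have $T=\infty$, so this bounds the true count. A union bound with \Cref{thm:main} gives total failure probability at most $3n^{-1}$.

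One subtlety is that $\kappa^{\OL}$ is itself adaptive. I would state the sample bound in terms of the realized $\kappa^{\OL}$ and apply Freedman with $\mu_n$ itself random; the lemma permits this because its bound is $S_k \le 2\mu_k + x$ with $\mu_k$ random. The polynomially bounded integer case follows by the same argument, using the second part of \Cref{thm:online_lev_sum}.

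\textbf{Step 3: memory.} The algorithm only needs to store $\Tilde{\AA}_{i-1}$, or equivalently the rows kept so far. It computes $p_i$ from these rows and $\aa_i$ alone, for example by computing $\aa_i^\top(\Tilde{\AA}_{i-1}^\top\Tilde{\AA}_{i-1}+\aa_i\aa_i^\top)^\dag\aa_i$ via a least-squares solve over the stored rows. This never forms a $d\times d$ matrix, so the space is proportional to the size of the sparsifier.

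\textbf{Main obstacle.} The delicate step is Step 2: correctly coupling the Freedman bound with the correctness event through the stopping time. The expectation bound on $p_i$ is valid only while the sparsifier is still good, and the leverage-score sum must be taken over the realized adaptive stream.
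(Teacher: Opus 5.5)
Your proposal is correct and follows the paper's proof: the same choice of $p_i$, the bound $p_i \le \rho\frac{1+\eps}{1-\eps}\tau^{\OL}(\aa_i)$ on the correctness event, scalar Freedman for the sample count combined with \Cref{thm:online_lev_sum}, a union bound with \Cref{thm:main}, and an iterative solver that avoids forming a $d\times d$ matrix. The one difference is your stopping time $T$, which is harmless but unnecessary: \Cref{lma:scalar_freedman} already bounds $N_n$ by $2\mu_n + x$ with the random $\mu_n=\sum_i p_i$, so the paper applies it to the untruncated count and bounds $\mu_n$ pointwise on the correctness event, exactly as your ``subtlety'' paragraph observes.
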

\begin{proof}
    We show that runing \Cref{algo:main_algo} with $p_i = \min \setof{1, \rho \cdot (1+\eps) \cdot \aa_i^\top (\Tilde{\AA}_{i-1}^\top \Tilde{\AA}_{i-1})^\dag \aa_i}$ yields the desired guarantees.
    
    So let $(\mathcal{F}_i)_{i=1}^n$ be the filtration where $\mathcal{F}_i$ includes all random bits used by the algorithm after the $i$th row has been processed. Note that we may assume that the adversary is deterministic. Consequently, once $\cF_{i-1}$ is fixed, $\aa_i$ is deterministic, and it is $\mathcal{F}_{i-1}$-measurable.

    Let $\tilde{\tau}_i := \aa_i^\top (\Tilde{\AA}_{i-1}^\top \Tilde{\AA}_{i-1} + \aa_i \aa_i^{\top})^\dag \aa_i$. Let $\xi_i \sim \operatorname{Ber}(p_i)$, where $p_i = \min \setof{1, \rho \cdot (1+\epsilon) \tilde{\tau}_i}$. Note that $\xi_i$ is $\mathcal{F}_i$-measurable, and that $\tilde{\tau}_i, p_i$ are $\mathcal{F}_{i-1}$-measurable. Let $N_i = \sum_{j=1}^i \xi_j$ denote the number of sampled rows after processing the first $i$ rows. Consider the event 

    $$\mathcal{E} := \setof{N_n \leq 2 \log n + 2 \sum_{i=1}^n p_i} \cap \setof{\forall i: (1- \eps) \AA_i^\top \AA_i \preceq \Tilde{\AA}_i^\top \Tilde{\AA}_i \preceq (1+\epsilon) \AA_i^\top \AA_i}.$$
    
    Then proving the correctness of the algorithm is equivalent to proving that 

    $$
    \Pr[\mathcal{E}] \geq 1 - 3n^{-1}.
    $$
    
    To see why, note that as long as $(1- \eps)\AA_i^\top \AA_i \preceq \Tilde{\AA}_i^\top \Tilde{\AA}_i$, we have that $\tilde{\tau}_i \leq 1/(1- \eps) \tau^{\OL}(\aa_i)$, i.e., that $p_i \leq \frac{\rho (1+\epsilon)}{1-\epsilon} \tau^{\OL}(\aa_i)$. Combining this with \Cref{thm:online_lev_sum} shows that
    \[
    N_n \vecone_{\mathcal{E}} \leq 2 \log n + 2 \sum_{i=1}^n p_i \vecone_{\mathcal{E}} \leq 2 \log n + \frac{2\rho (1+\epsilon)}{1- \epsilon} \sum_{i=1}^n \tau^{\operatorname{OL}}(\aa_i) =  O(d \eps^{-2} \log n \log \kappa^{\operatorname{OL}}).
    \]

    It thus remains to show that the event happens with probability at least $1-3n^{-1}$. To do so, notice that $N_i$ is $\mathcal{F}_i$-measurable. The scalar Freedman inequality from \Cref{lma:scalar_freedman} then yields that $N_n \leq 2 \log n + 2 \sum_{i=1}^n p_i$ with probability at least $1-n^{-2}$. We can finish the proof by a union bound: 
    \[
    \Prb[\mathcal{E}^c] \leq 
    \Prb[N_n > 2 \log n + 2\sum_{i=1}^n p_i] + \Prb[\exists i: (1- \eps) \AA_i^\top \AA_i \not \preceq \Tilde{\AA}_i^\top \Tilde{\AA}_i] 
    \leq n^{-2} + 2 n^{-1} \leq 3 n^{-1}.
    \]
    Finally, the working memory bound follows from the fact that one can compute $(\Tilde{\AA}_{i-1}^\top \Tilde{\AA}_{i-1} + \aa_i \aa_i^{\top})^\dag \aa_i$ using e.g. the conjugate gradient method without having to form the dense matrix $\Tilde{\AA}_{i-1}^\top \Tilde{\AA}_{i-1}$ explicitly. 
\end{proof}

\subsection{Proof of \Cref{thm:main}}\label{sec:thm_proof}

In order to prove the theorem, we have to employ the following custom matrix-concentration bound, which we will prove later in Section \ref{sec:concentration}. It deals with the issue that even the matrix $\XX_i^{\top} \XX_i$ is itself a random matrix due to the fact that the adversary might choose new rows dependent on the sampling outcomes of the previous rows.

\begin{restatable}{theorem}{concentrationTheorem}\label{lem:fixed-freedman}
Let $(\cF_i)_{i=0}^{n}$ be a filtration.  Let $\YY_0, \VV_0 = \matzero \in \mathbb{R}^{d \times d}$.  At index $i$, suppose $\CC_i \in \R^{d \times d}$ and $\WW_i \in \R^{d \times d}$ are $\cF_{i-1}$-measurable, $\CC_i$ satisfies that $\CC_i \CC_i^\top \preceq \II_d$,
and $\ZZ_i$ is an $\cF_i$-measurable symmetric random matrix such that
\[
        \E[\ZZ_i\mid\cF_{i-1}]=0,
        \qquad
        \lambda_{\max}(\ZZ_i)\le R,
        \qquad
        \E[\ZZ_i^2\mid\cF_{i-1}]\pe \WW_i.
\]
Define
\begin{equation}\label{eq:YV-recursion}
        \YY_i=\CC_i\YY_{i-1}\CC_i^\top+\ZZ_i,
        \qquad
        \VV_i=\CC_i\VV_{i-1}\CC_i^\top+\WW_i.
\end{equation}
Then, for every $u\ge0$,
\begin{equation}\label{eq:fixed-freedman}
        \Prb\!\left[\exists i \leq n:\lambda_{\max}(\YY_i)\ge u \land \VV_i \preceq \sigma^2 \II_d\right]
        \le d(n+1)\exp\!\left(-\frac{u^2/2}{\sigma^2+Ru/3}\right).
\end{equation}
\end{restatable}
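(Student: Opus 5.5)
The plan is to run Tropp's trace-exponential (matrix Freedman) argument on the combined process $\HH_i := \theta \YY_i - g(\theta)\VV_i$. Here $\theta>0$ is a parameter and $g(\theta) := (e^{\theta R}-\theta R-1)/R^2$ is the usual Bennett/Bernstein function. The contraction $\CC_i$ will be absorbed using a Jensen-type trace inequality. By \eqref{eq:YV-recursion},
\[
\HH_i = \CC_i \HH_{i-1}\CC_i^\top + \theta \ZZ_i - g(\theta)\WW_i ,
\]
and $\CC_i\HH_{i-1}\CC_i^\top - g(\theta)\WW_i$ is $\cF_{i-1}$-measurable. The potential is $\Phi_i := \tr \exp(\HH_i)$. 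The goal is to show $\E[\Phi_i \mid \cF_{i-1}] \le \Phi_{i-1} + d$, so that $\Phi_i - d\,i$ is a supermartingale started at $\Phi_0 = d$.

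\textbf{Step 1 (martingale increment).} Since $\lambda_{\max}(\ZZ_i)\le R$ and $\E[\ZZ_i\mid\cF_{i-1}]=0$, the standard scalar-to-matrix transfer gives $\E[e^{\theta\ZZ_i}\mid\cF_{i-1}] \pe \exp(g(\theta)\E[\ZZ_i^2\mid\cF_{i-1}]) \pe \exp(g(\theta)\WW_i)$. Lieb's concavity theorem (in Tropp's form $\E\,\tr\exp(\HH+\XX)\le \tr\exp(\HH+\log\E e^{\XX})$) and monotonicity of $\log$ and of $\tr\exp$ then give
\[
\E[\Phi_i\mid\cF_{i-1}] \le \tr\exp\bigl(\CC_i\HH_{i-1}\CC_i^\top\bigr).
\]

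\textbf{Step 2 (the contraction; main obstacle).} The difficulty is that $\HH_{i-1}$ is indefinite, so $\tr e^{\CC\HH\CC^\top}\le \tr e^{\HH}$ can fail. I would use the Hansen--Pedersen / Brown--Kosaki trace Jensen inequality: for a convex $f$ with $f(0)\le 0$ and $\CC\CC^\top\pe \II$, one has $\tr f(\CC\HH\CC^\top)\le \tr\, \CC f(\HH)\CC^\top$. Take $f(x)=e^x-1$, write $\XX := e^{\HH_{i-1}}-\II$, and note $\XX \succeq -\II$. Then
\[
\tr\bigl(e^{\CC_i\HH_{i-1}\CC_i^\top}\bigr) - d \le \tr\bigl(\XX\CC_i^\top\CC_i\bigr) \le \tr\XX + \tr\bigl((\II-\CC_i^\top\CC_i)\XX_-\bigr) \le \Phi_{i-1} - d + d .
\]
Here $\CC_i^\top\CC_i\pe\II$ because it has the same nonzero spectrum as $\CC_i\CC_i^\top$. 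This yields $\E[\Phi_i\mid\cF_{i-1}]\le \Phi_{i-1}+d$. The additive $d$ lost at each step is exactly what produces the factor $d(n+1)$ in \eqref{eq:fixed-freedman}, which suggests this is the intended route. If the cited trace Jensen inequality turns out to need a different normalization, I would instead prove the needed scalar bound directly via the eigen-decomposition of $\HH_{i-1}$ and convexity of $e^x-1$ on each rank-one piece.

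\textbf{Step 3 (stopping and optimization).} Let $T$ be the first $i\le n$ with $\lambda_{\max}(\YY_i)\ge u$ and $\VV_i\pe\sigma^2\II_d$, and set $T=n$ otherwise. Optional stopping for the supermartingale $\Phi_i - d i$ gives $\E[\Phi_{T}]\le d + dn = d(n+1)$. On the bad event,
\[
\lambda_{\max}(\HH_T)\ge \theta\lambda_{\max}(\YY_T) - g(\theta)\lambda_{\max}(\VV_T)\ge \theta u - g(\theta)\sigma^2 ,
\]
so $\Phi_T\ge e^{\theta u - g(\theta)\sigma^2}$. Markov's inequality then bounds the probability by $d(n+1)\,e^{-\theta u + g(\theta)\sigma^2}$ for every $\theta>0$. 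Finally, use $g(\theta)\le \frac{\theta^2/2}{1-\theta R/3}$ for $\theta<3/R$ and choose $\theta = u/(\sigma^2+Ru/3)$, exactly as in the proof of the matrix Bernstein/Freedman inequality. This gives the exponent $-\frac{u^2/2}{\sigma^2+Ru/3}$ and completes \eqref{eq:fixed-freedman}.
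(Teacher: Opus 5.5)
Your proof is correct and has the same architecture as the paper's. Both use the potential $\tr\exp(\theta\YY_i-g(\theta)\VV_i)$, Lieb's theorem plus the MGF bound to absorb $\ZZ_i$, an additive loss of $d$ per step from the contraction, optional stopping for $\Phi_i-d\,i$, and Markov with $\theta=u/(\sigma^2+Ru/3)$. Carrying the Bennett function and relaxing it only at the end is a cosmetic difference.

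The genuine difference is how you prove the contraction step $\tr e^{\CC\HH\CC^\top}\le d+\tr e^{\HH}$.

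\textbf{The paper's route.} It counts eigenvalues: for every $a>0$, $\CC\HH\CC^\top$ has at most as many eigenvalues above $a$ as $\HH$. This follows from Courant--Fischer on $\CC^\top\calS$, using $\smallnorm{\CC^\top\xx}_2\le\smallnorm{\xx}_2$. A layer-cake integral then gives $\tr e^{\CC\HH\CC^\top}\le d+\sum_{\lambda_i(\HH)>0}(e^{\lambda_i(\HH)}-1)$.

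\textbf{Your route.} You apply the trace Jensen inequality to the convex function $e^x-1$, which vanishes at $0$, and split $e^{\HH}-\II$ into positive and negative parts. Your intermediate bound $d+\tr\bigl((e^{\HH}-\II)\CC^\top\CC\bigr)$ is never larger than the paper's, and it retains the information in $\CC^\top\CC$. Your fallback is as elementary as the paper's argument: for a unit eigenvector $\xx$ of $\CC\HH\CC^\top$, the eigenvalue equals $(\CC^\top\xx)^\top\HH(\CC^\top\xx)$ with $\smallnorm{\CC^\top\xx}_2\le 1$. It is therefore a sub-convex combination of eigenvalues of $\HH$, and convexity with $f(0)\le 0$ applies eigenvector by eigenvector.

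\textbf{What the paper's route buys.} The counting argument proves a stronger spectral fact: $\max\{\mu_k,0\}\le\max\{\lambda_k,0\}$ for the $k$-th largest eigenvalues $\mu_k$ of $\CC\HH\CC^\top$ and $\lambda_k$ of $\HH$. It thus bounds $\tr\phi(\CC\HH\CC^\top)$ by $\sum_k\phi(\max\{\lambda_k,0\})$ for every nondecreasing $\phi$, without needing convexity.
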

Using this tool, we can now prove the main theorem.
\begin{proof}[Proof of \Cref{thm:main}]
Let $(\mathcal{F}_i)$ be the filtration where $\mathcal{F}_i$ includes all random bits used by the algorithm after the $i$th row has been processed. Note that we may assume that the adversary is deterministic. Consequently, once $\cF_{i-1}$ is fixed, $\aa_i$ is deterministic, and it is $\mathcal{F}_{i-1}$-measurable.

We now proceed to proving that $\Tilde{\AA}_i$ is a spectral approximation of $\AA_i$. To begin, let us set $\KK_i = \AA_i^\top \AA_i$, and $\Tilde{\KK}_i = \Tilde{\AA}_i^\top \Tilde{\AA}_i$. Further, we define the normalized error matrix
\begin{equation}\label{eq:E-def}
        \EE_i = \KK_i^{\dag/2}(\Tilde{\KK}_i - \KK_i)\KK_i^{\dag/2}.
\end{equation}
Our goal will be to show that $\norm{\EE_i}_{\mathrm{op}} \leq \eps$ with high probability. This implies that $(1- \eps) \KK_i \preceq \Tilde{\KK}_i \preceq (1+\eps) \KK_i$.

For the arriving row $\aa_i$, set
\[
        \vv_i = \KK_i^{-1/2}\aa_i\in\R^d,
        \qquad
        \norm{\vv_i}_2^2=\aa_i^\top \KK_i^{-1} \aa_i=\tau_i^{\operatorname{OL}},
\]
and
\[
        z_i=\frac{\xi_i}{p_i}-1,
        \qquad
        \ZZ_i=z_i \vv_i\vv_i^\top.
\]
Note that 
\begin{equation}\label{eq:long}
\EE_{i} = \KK_i^{\dag/2} \KK_{i-1}^{1/2} \EE_{i-1} \KK_{i-1}^{1/2} \KK_i^{\dag/2} + \ZZ_i.
\end{equation}
So let us define
\begin{equation}\label{eq:C-def}
        \CC_i:=\KK_i^{\dag/2}\KK_{i-1}^{1/2}.
\end{equation}
Notice that $\CC_i$ is $\mathcal{F}_{i-1}$-measurable, as the adversary chooses row $\aa_i$ based on our output $\Tilde{\AA}_{i-1}$. Since $\KK_{i-1}\pe \KK_i$,
\[
        \CC_i\CC_i^\top = \KK_i^{\dag/2}\KK_{i-1}\KK_i^{\dag/2}\pe \II_d.
\]
We can then write \Cref{eq:long} in the more compact form
\begin{equation}\label{eq:E-recursion}
        \EE_i=\CC_i\EE_{i-1}\CC_i^\top +\ZZ_i.
\end{equation}

The increment is conditionally mean zero:
\[
        \E[\ZZ_i\mid\cF_{i-1}]
        =\left(\frac{\E[\xi_i\mid\cF_{i-1}]}{p_i}-1\right)\vv_i\vv_i^\top
        =\matzero.
\]

We now need to introduce an additional stopping time, which deals with the fact that our current value $\tilde{\tau}_i$ might not be close to the value of $\tau^{\operatorname{OL}}(\aa_i)$. To do so, let $T$ be the first time $i$ where $ \Tilde{\AA}_i^\top \Tilde{\AA}_i \not \preceq (1+\epsilon) \AA_i^\top \AA_i$, and let $T = n$ on the complement of this event. We from now on consider the stopped martingale $\EE_i^{\operatorname{stop}} := \EE_{T \land i}$. Notice that $\norm{\EE_i}_{\mathrm{op}} \geq \epsilon \implies \norm{\EE_{i \land T}}_{\mathrm{op}} \geq \epsilon$, so it suffices to prove that $\norm{\EE_{i}^{\operatorname{stop}}}_{\mathrm{op}} \leq \epsilon$. 

To do so, we proceed by showing bounds on the parameters $R$ and $\sigma^2$ from \Cref{lem:fixed-freedman} for the stopped martingale $\EE_{i}^{\operatorname{stop}}$. To do so, start by noting that if $T \leq i-1$, then $\ZZ_i^{\operatorname{stop}} = \ZZ_{i-1}^{\operatorname{stop}}$, so in this case we have $\WW_i = 0$, and $R = 0$, and so beyond $T$, the values of $\sigma^2$ and $R$ will not change anymore. If $T \geq i-1$, then note first that this implies that $\Tilde{\AA}_{i-1}^\top \Tilde{\AA}_{i-1} \preceq (1+\epsilon) \AA_{i-1}^\top \AA_{i-1}$. Consequently, in this case, $\tilde{\tau}_{i \land T} \geq \rho (1+\epsilon) \frac{1}{1+\epsilon} \tau_i^{\operatorname{OL}} = \rho \tau_i^{\operatorname{OL}}$. Now if $p_i=1$, then $\xi_i=1$ almost surely and $\ZZ_i=0$.  If $p_i<1$, then
$p_i\geq\rho \cdot (1+\epsilon) \cdot \Tilde{\tau}_i \geq \rho \tau^{\operatorname{OL}}_i$.  In this case, when $\xi_i=1$ the scalar $z_i$ equals
$1/p_i-1$, and when $\xi_i=0$ it equals $-1$.  Therefore
\begin{align}
        \lambda_{\max}(\ZZ_i)
        &\le \left(\frac1{p_i}-1\right)\norm{\vv_i}_2^2
         \le \frac{\tau_i^{\operatorname{OL}}}{p_i}=\frac1\rho,\label{eq:upper-inc}\\
        \lambda_{\max}(-\ZZ_i)
        &\le \norm{\vv_i}_2^2=\tau_i^{\operatorname{OL}}\le\frac1\rho.\label{eq:lower-inc}
\end{align}
Thus both tails have increment bound $R=\rho^{-1}$.

For the conditional variance, use
\[
        (\vv_i\vv_i^\top)^2=\norm{\vv_i}_2^2\vv_i\vv_i^\top=\tau_i^{\operatorname{OL}}\vv_i\vv_i^\top.
\]
Also
\[
        \E[(z_i)^2\mid\cF_{i-1}]
        =p_i\left(\frac1{p_i}-1\right)^2+(1-p_i)
        =\frac{1-p_i}{p_i}
        \le \frac1{p_i}.
\]
Therefore
\begin{equation}\label{eq:variance-increment}
        \E[\ZZ_i^2\mid\cF_{i-1}]
        =\E[(z_i)^2\mid\cF_{i-1}] \cdot (\vv_i\vv_i^\top)^2 \pe \frac{\tau_i^{\operatorname{OL}}}{p_i} \vv_i \vv_i^\top 
        \pe \WW_i:=\frac1\rho \vv_i\vv_i^\top
\end{equation}
When $p_i=1$, then  $\E[(z_i)^2\mid\cF_{i-1}] = 0$, so the same bound is still valid.

Let
\[
        \VV_i=\CC_i\VV_{i-1}\CC_i^\top + \WW_i,
        \qquad \VV_0=\matzero.
\]
We next show that
\begin{equation}\label{eq:variance-budget}
        \VV_i\pe \frac1\rho \II\qquad\text{for every }i.
\end{equation}
Indeed, assuming it holds at $i-1$, then if $T \leq i-1$, it will also hold at $i$. But also if $T \geq i-1$, then
\begin{align*}
        \VV_i
        &\pe \frac1\rho \CC_i\CC_i^\top +\frac1\rho \vv_i\vv_i^\top \\
        &=\frac1\rho \KK_i^{\dag/2}\KK_{i-1}\KK_i^{\dag/2}
          +\frac1\rho \KK_i^{\dag/2}\aa_i\aa_i^\top \KK_i^{\dag/2} \\
        &=\frac1\rho \KK_i^{\dag/2}\KK_i\KK_i^{\dag/2}
         \preceq \frac1\rho \II.
\end{align*}
This proves \eqref{eq:variance-budget} by induction. 

Apply \Cref{lem:fixed-freedman} with the matrix process $\YY_i=\EE_i^{\operatorname{stop}}$ and with
\[
        R=\rho^{-1},
        \qquad \sigma^2=\rho^{-1},
        \qquad u=\eps.
\]
For any fixed $n$,
\begin{align}\label{eq:one-tail}
        \Prb\!\left[\exists i\le n:\lambda_{\max}(\EE_{i}^{\operatorname{stop}})\ge\eps\right]
        &\le d(n+1)\exp\!\left(
          -\frac{\eps^2/2}{\rho^{-1}+\eps(3\rho)^{-1}}
        \right) \\
        &\le d(n+1)\exp\!\left(-\frac38\eps^2\rho\right),
\end{align}
where $0<\eps\le1$ was used in the last line.  Applying the same bound to $-\EE_i$ gives
\begin{equation}\label{eq:two-tail}
        \Prb\!\left[\exists i\le n:\norm{\EE_{i}^{\operatorname{stop}}}_{\mathrm{op}}\ge\eps\right]
        \le 2d(n+1)\exp\!\left(-\frac38\eps^2\rho\right).
\end{equation}
Plugging in the value of $\rho = 8 \eps^{-2} \log n$ yields a bound of $2n^{-1}$. This finishes the proof of the spectral-accuracy part of Theorem \ref{thm:main}.

\end{proof}

\section{Custom Matrix Concentration Bound}\label{sec:concentration}


\subsection{Standard Tools}
We start by recalling the following standard tools
\begin{lemma}[Matrix MGF bound (see \cite{tropp2015introduction})]\label{lem:mgf}
Let $R>0$ and let $\XX$ be a symmetric random matrix with $\E[\XX]=0$ and
$\lambda_{\max}(\XX)\le R$ almost surely.  For every $0 < \theta < 3 / R$,
\begin{equation}\label{eq:mgf-bound}
        \log \E e^{\theta \XX}\pe
        g_R(\theta)\E[\XX^2],
        \qquad
        g_R(\theta)=
        \frac{\theta^2 / 2}{1 - \theta R /3}
\end{equation}
\end{lemma}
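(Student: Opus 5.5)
The plan is the standard Bernstein-type argument: first establish a scalar inequality, then lift it to matrices with the transfer rule, and finally use the operator monotonicity of the logarithm. Fix $0<\theta<3/R$ and define the scalar function
\[
f(x)=\frac{e^{\theta x}-1-\theta x}{x^2}\quad (x\neq 0),\qquad f(0)=\frac{\theta^2}{2}.
\]
The first step is to show that $f$ is nondecreasing on $\R$. Taylor's formula with integral remainder gives
\[
f(x)=\theta^2\int_0^1(1-s)e^{s\theta x}\,ds,
\]
and the integrand is nondecreasing in $x$ because $\theta>0$. Consequently every $x\le R$ satisfies
\[
e^{\theta x}\le 1+\theta x+f(R)\,x^2 .
\]

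The second step lifts this to matrices. Every eigenvalue of $\XX$ is at most $R$ almost surely, and the scalar inequality holds pointwise on $(-\infty,R]$. By the transfer rule (spectral mapping on a common eigenbasis), we get
\[
e^{\theta\XX}\pe \II+\theta\XX+f(R)\,\XX^2
\]
almost surely. Taking expectations preserves the Loewner order, and $\E[\XX]=0$, so
\[
\E e^{\theta\XX}\pe \II+f(R)\,\E[\XX^2].
\]
Both sides are positive definite, and the matrix logarithm is operator monotone. Therefore
\[
\log\E e^{\theta\XX}\pe\log\bigl(\II+f(R)\E[\XX^2]\bigr)\pe f(R)\,\E[\XX^2].
\]
The last inequality applies the scalar bound $\log(1+a)\le a$ eigenvalue-wise to the positive semidefinite matrix $f(R)\E[\XX^2]$.

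The third step bounds $f(R)$ by $g_R(\theta)$. Put $y=\theta R\in(0,3)$. Using $k!\ge 2\cdot 3^{k-2}$ for $k\ge2$,
\[
e^{y}-1-y=\sum_{k\ge2}\frac{y^k}{k!}\le\frac{y^2}{2}\sum_{k\ge2}\Bigl(\frac y3\Bigr)^{k-2}=\frac{y^2/2}{1-y/3}.
\]
Dividing by $R^2$ gives $f(R)\le \frac{\theta^2/2}{1-\theta R/3}=g_R(\theta)$. Since $\E[\XX^2]\succeq 0$, this completes the chain.

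The only delicate point is the matrix lifting in the second step. The transfer rule is valid here because all three terms $e^{\theta\XX}$, $\theta\XX$ and $f(R)\XX^2$ are functions of the single matrix $\XX$, so they share its eigenbasis. Passing to the logarithm requires operator monotonicity of $\log$, not mere scalar monotonicity. I take that fact as standard (it follows, for instance, from the integral representation $\log t=\int_0^\infty\bigl(\frac{1}{1+s}-\frac{1}{t+s}\bigr)ds$). The remaining steps are elementary.
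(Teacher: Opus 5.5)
Your proof is correct, and it is essentially the standard matrix Bernstein MGF argument from Tropp's monograph \cite{tropp2015introduction}, which the paper cites for this lemma rather than proving it. That argument uses monotonicity of $f(x)=(e^{\theta x}-1-\theta x)/x^2$, the transfer rule to get $e^{\theta\XX}\pe \II+\theta\XX+f(R)\XX^2$, operator monotonicity of $\log$, and $k!\ge 2\cdot 3^{k-2}$ to bound $f(R)\le g_R(\theta)$; your integral-remainder proof of monotonicity and your direct use of $\log(1+a)\le a$ are only cosmetic variations.
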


\begin{theorem}[Lieb (see \cite{tropp2015introduction})]\label{thm:lieb}
Let $\HH$ be a fixed symmetric matrix and let $\XX$ be a random symmetric matrix of the same
dimension.  Then
\begin{equation}\label{eq:lieb}
        \E\,\tr\exp(\HH+\XX)
        \le \tr\exp\!\left(\HH+\log\E e^{\XX} \right),
\end{equation}
whenever the expectations exist. 
\end{theorem}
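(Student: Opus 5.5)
The plan is to deduce the inequality from Lieb's concavity theorem by Jensen's inequality; the whole argument rests on that one classical fact. I would quote Lieb's concavity theorem in the following form: for every fixed symmetric $\HH$, the map
\[
        f(\AA) := \tr\exp\!\left(\HH + \log \AA\right)
\]
is concave on the cone of positive definite matrices. I would cite this, as \cite{tropp2015introduction} does, rather than reprove it. Its proof goes through Lieb's theorem on the joint concavity of $(\AA,\BB)\mapsto \tr \KK^\top \AA^{s}\KK\BB^{1-s}$, or equivalently through Epstein's theorem, and is the only genuinely deep step.

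Next I substitute $\AA = e^{\XX}$. Since $\XX$ is symmetric, $e^{\XX}$ is positive definite almost surely and $\log e^{\XX} = \XX$. Hence
\[
        \tr\exp(\HH+\XX) = f\!\left(e^{\XX}\right).
\]
Assuming $\E e^{\XX}$ exists, it is again positive definite, since it is an average of positive definite matrices. So $\log \E e^{\XX}$ is well defined, and the right-hand side of \eqref{eq:lieb} equals $f(\E e^{\XX})$.

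It then remains to apply Jensen's inequality for the concave function $f$, namely $\E f(e^{\XX}) \le f(\E e^{\XX})$, which is exactly \eqref{eq:lieb}. For finitely supported $\XX$ this is just the definition of concavity, iterated. For general distributions I would use that $f$ is concave and continuous on the open cone of positive definite matrices. At the point $\AA_0 = \E e^{\XX}$, $f$ therefore admits a supporting affine function $\ell(\AA) = f(\AA_0) + \tr(\mathbf{G}(\AA-\AA_0))$ with $f \le \ell$ on the cone; here $\mathbf{G}$ is a supergradient, which exists because $\AA_0$ is an interior point. Taking expectations gives $\E f(e^{\XX}) \le \E \ell(e^{\XX}) = f(\AA_0)$, using linearity and $\E e^{\XX} = \AA_0$.

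I expect the main obstacle to be Lieb's concavity theorem itself, which I am treating as a black box. Beyond that, the only care needed is integrability: I need $\E\,\tr\exp(\HH+\XX)$ to make sense. This holds since the integrand is nonnegative, so the expectation exists in $[0,\infty]$. By the supporting-hyperplane bound it is finite whenever $\E e^{\XX}$ is. This covers the assumption ``whenever the expectations exist.''
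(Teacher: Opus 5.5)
Your argument is correct. It is the standard derivation in \cite{tropp2015introduction}: apply Lieb's concavity theorem to $\AA \mapsto \tr\exp(\HH+\log\AA)$, substitute $\AA = e^{\XX}$, and use Jensen's inequality, with your supergradient argument covering general distributions and integrability. The paper gives no proof of its own and simply imports the result from that reference, so your route is the one it relies on.
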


\subsection{Trace Lemma}

The evolving normalization will replace a matrix $\HH$ by $\CC\HH\CC^\top$, where $\CC$ satisfies that $\CC\CC^\top \preceq \II_d$. The following lemma says that this operation cannot create many large positive
eigenvalues.  The extra additive $d$ accounts for eigenvalues that are nonpositive, since
$e^\lambda\le1$ for $\lambda\le0$.

\begin{lemma}\label{lma:contraction-trace}
Let $\HH \in \R^{d \times d}$ be a symmetric matrix, and let $\CC \in \R^{d \times d}$ be a matrix satisfying $\CC\CC^\top \pe \II$.  Then
\begin{equation}\label{eq:trace-contraction}
        \tr\exp(\CC\HH\CC^\top)\le d+\tr\exp(\HH).
\end{equation}
\end{lemma}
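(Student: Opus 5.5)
The plan is to reduce to the case where $\HH$ is positive semidefinite, and then use that a contraction cannot increase eigenvalues of a PSD matrix. Two standard facts will be used throughout. First, $\tr\exp$ is monotone in the Loewner order: if $\MM_1\pe\MM_2$, Weyl's inequality gives $\lambda_j(\MM_1)\le\lambda_j(\MM_2)$ for every $j$, hence $\tr\exp(\MM_1)\le\tr\exp(\MM_2)$. Second, $\CC\CC^\top\pe\II$ is equivalent to $\normop{\CC}\le 1$, which in turn gives $\CC^\top\CC\pe\II$.

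\textbf{Step 1 (reduce to the positive part).} Write the spectral decomposition $\HH=\HH_+-\HH_-$ with $\HH_+,\HH_-\succeq 0$. Then $\CC\HH_-\CC^\top\succeq 0$, so
\[
\CC\HH\CC^\top\pe\CC\HH_+\CC^\top .
\]
By monotonicity, $\tr\exp(\CC\HH\CC^\top)\le\tr\exp(\CC\HH_+\CC^\top)$.

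\textbf{Step 2 (contraction on the PSD part).} Write $\CC\HH_+\CC^\top=\XX\XX^\top$ with $\XX=\CC\HH_+^{1/2}$, a square $d\times d$ matrix. Since $\XX\XX^\top$ and $\XX^\top\XX$ are both $d\times d$, they have the same eigenvalues with multiplicity. Moreover,
\[
\XX^\top\XX=\HH_+^{1/2}\CC^\top\CC\HH_+^{1/2}\pe\HH_+ .
\]
Weyl's inequality then gives $\lambda_j(\CC\HH_+\CC^\top)\le\lambda_j(\HH_+)$ for all $j$, and hence $\tr\exp(\CC\HH_+\CC^\top)\le\tr\exp(\HH_+)$.

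\textbf{Step 3 (compare $\HH_+$ with $\HH$).} Let $\lambda_1,\ldots,\lambda_d$ be the eigenvalues of $\HH$. Then
\[
\tr\exp(\HH_+)=\sum_{\lambda_j>0}e^{\lambda_j}+\#\{j:\lambda_j\le0\}\le\tr\exp(\HH)+d,
\]
since every eigenvalue contributes at most its term $e^{\lambda_j}$ in $\tr\exp(\HH)$ plus an extra $1$. Chaining Steps 1–3 gives \eqref{eq:trace-contraction}.

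The only step needing care is Step 2. The point there is to avoid comparing $\CC\HH_+\CC^\top$ with $\HH_+$ directly, which is false in general because $\CC$ may rotate. Passing to $\XX^\top\XX$ fixes this, and it is essential that $\XX$ is square so that no multiplicity bookkeeping of zero eigenvalues is needed.
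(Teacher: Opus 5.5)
Your proof is correct, and it reaches the same intermediate fact as the paper by a different route.

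The paper never splits $\HH$. It fixes $a>0$ and lets $\mathcal{S}$ be the span of the eigenvectors of $\CC\HH\CC^\top$ with eigenvalue above $a$. For $\xx\in\mathcal{S}\setminus\{0\}$, the vector $\CC^\top\xx$ has $\HH$-Rayleigh quotient exceeding $a$, because $\CC^\top$ shrinks norms and $a>0$. So $\CC^\top$ is injective on $\mathcal{S}$, and Courant--Fischer gives $N_{\CC\HH\CC^\top}(a)\le N_{\HH}(a)$, where these count the eigenvalues above $a$. The paper then converts this counting inequality into the trace bound via the layer-cake identity $\sum_{\lambda\ge 0}(e^{\lambda}-1)=\int_0^\infty e^a N(a)\,da$.

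You proceed in three modular steps:
\begin{itemize}
\item discard $\HH_-$ by Loewner monotonicity;
\item handle the PSD part with $\operatorname{spec}(\XX\XX^\top)=\operatorname{spec}(\XX^\top\XX)$ together with $\XX^\top\XX\pe\HH_+$;
\item finish with a termwise scalar comparison.
\end{itemize}

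Both arguments establish $\lambda_j(\CC\HH\CC^\top)\le\max\{\lambda_j(\HH),0\}$; the paper's counting inequality for all $a>0$ encodes exactly this. The paper's version is a single self-contained min--max argument that works directly on $\HH$. Yours uses only off-the-shelf facts (Weyl monotonicity and the $\XX\XX^\top$ versus $\XX^\top\XX$ spectrum identity). It avoids the injectivity check and the integral, and it states the eigenvalue domination explicitly, which makes the final comparison with $\tr\exp(\HH)+d$ immediate.
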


\begin{proof}
For a symmetric matrix $\YY$, let us denote by $N_{\YY}(a)$ the number of eigenvalues of $\YY$ that are strictly larger than $a$ (counting multiplicity).  We start by showing that
\begin{equation}\label{eq:counting}
        N_{\CC\HH\CC^\top}(a)\le N_{\HH}(a)
        \qquad\text{for every }a>0.
\end{equation}
By assumption that $\CC\CC^\top \preceq \II$, it holds for any $\xx \in \R^n$ that $\norm{\CC^\top \xx}_2^2 = \xx^\top \CC \CC^\top \xx \leq \xx^\top \xx = \norm{\xx}_2^2$. 
Let $\calS \subseteq \R^n$ be the subspace spanned by the eigenvectors of $\CC\HH\CC^\top$ corresponding to eigenvalues larger than $a$.
If $\xx \in \calS \setminus\{0\}$, then
\[
        (\CC^\top \xx)^\top \HH \CC^\top \xx
        =\xx^\top \CC\HH\CC^\top \xx
        >a\norm{\xx}_2^2 \ge \norm{\CC^\top \xx}_2^2.
\]
Note that this implies on one hand that $\dim{\CC^\top \calS} = \dim{\calS}$, and, consequently, by the variational characterization of eigenvalues, also that $\HH$ must have at least $\dim{\calS}$ eigenvalues strictly larger than $a$, which proves \eqref{eq:counting}.

For a scalar $\lambda \geq 0$, note the elementary equality
\[
        e^\lambda-1=\int_0^\infty e^a\mathbf 1_{\{a<\lambda\}}\,da.
\]
Summing over the positive eigenvalues of $\YY$ gives the identity
\[
        \sum_{i: \lambda_i \geq 0} (e^{\lambda_i(\YY)}-1)
        = \int_0^\infty \sum_{i: \lambda_i \geq 0} e^a\mathbf 1_{\{a<\lambda_i\}}\,da = \int_0^\infty e^a N_{\YY}(a)\,da.
\]
Using \eqref{eq:counting} and that $e^{\lambda} \leq 1$ for $\lambda \leq 0$, we have that
\[
        \tr e^{\CC\HH\CC^\top}
        \le d+\sum_{i: \lambda_i \geq 0}(e^{\lambda_i(\CC\HH\CC^\top)}-1) 
        \le d+\sum_{i: \lambda_i \geq 0}(e^{\lambda_i(\HH)}-1)
        \le d+\tr(e^{\HH}).
\]
This proves the lemma.
\end{proof}

\subsection{Contractive Freedman Inequality}

We now prove the matrix concentration lemma used in the sampling argument.  The proof is a slightly modified version of the classic matrix Freedman concentration inequality. 

\concentrationTheorem*

\begin{proof}
Fix $0 < \theta < R / 3$ and set
\[
        g_R(\theta)=
        \frac{\theta^2 / 2}{1 - \theta R /3}.
\]
 Define the potential
\[
        \Psi_i=\tr\exp(\theta \YY_i-g_R(\theta)\VV_i).
\]
We will start by proving
\begin{equation}\label{eq:psi-step}
        \E[\Psi_i\mid\cF_{i-1}]\le \Psi_{i-1}+d.
\end{equation}
Condition on $\cF_{i-1}$.  At this point $\CC_i$, $\YY_{i-1}$, $\VV_{i-1}$, and $\WW_i$ are fixed, while
$\ZZ_i$ is the only fresh randomness.  Write
\[
        \HH_i=\theta \CC_i\YY_{i-1}\CC_i^\top
             -g_R(\theta)\CC_i\VV_{i-1}\CC_i^\top -g_R(\theta)\WW_i.
\]
By \Cref{lem:mgf} and the variance assumption,
\[
        \log\E[e^{\theta \ZZ_i}\mid\cF_{i-1}]
        \pe g_R(\theta)\E[\ZZ_i^2\mid\cF_{i-1}]
        \pe g_R(\theta)\WW_i.
\]
Using Lieb's inequality from \Cref{thm:lieb}, then the monotonicity of $\tr e^{\HH}$, and finally \Cref{lma:contraction-trace},
\begin{align*}
        \E[\Psi_i\mid\cF_{i-1}]
        &=\E\left[\tr\exp(\HH_i+\theta \ZZ_i)\mid\cF_{i-1}\right] \\
        &\le \tr\exp\!\left(\HH_i+\log\E[e^{\theta \ZZ_i}\mid\cF_{i-1}]\right) \\
        &\le \tr\exp\!\left(\theta \CC_i\YY_{i-1}\CC_i^\top
             -g_R(\theta)\CC_i\VV_{i-1}\CC_i^\top\right) \\
        &=\tr\exp\!\left(\CC_i(\theta \YY_{i-1}-g_R(\theta)\VV_{i-1})\CC_i^\top \right) \\
        &\le d+\tr\exp(\theta \YY_{i-1}-g_R(\theta)\VV_{i-1})
         =d+\Psi_{i-1}.
\end{align*}
This proves \eqref{eq:psi-step}.

Note that this means $\Psi_i$ is not quite a (super-)martingale yet. However, by a slight modification we can turn it into a super-martingale. To do so, set
\[
        Q_i=\Psi_i+d(T-i),\qquad 0\le i\le n.
\]
Then $(Q_i)$ is a nonnegative super-martingale.
Hence, for every stopping time $T \le n$, we have that
\begin{equation}\label{eq:optional-stop}
        \E Q_T \le Q_0=d(n+1),
\end{equation}
by the optional stopping theorem and because $\YY_0=\VV_0=\matzero$ and therefore $\Psi_0=\tr(\II_d)=d$. 

Let $T$ be the first index $i\le n$ at which $\lambda_{\max}(\YY_i)\ge u$ and $\VV_{i} \preceq \sigma^2 \II_d$, and set $T=n$ on the
complement of this event.  On the bad event, it follows that
\[
        \lambda_{\max}(\theta \YY_{T}-g_R(\theta) \VV_{T})
        \ge \theta u-g_R(\theta)\sigma^2.
\]
Consequently, on the bad event,
\[
        \Psi_{T} \ge \exp(\theta u-g_R(\theta)\sigma^2).
\]
Combining this with \eqref{eq:optional-stop} gives
\[
        \Prb\!\left[\exists i\le n:\lambda_{\max}(\YY_i)\ge u \land \VV_i \preceq \sigma^2 I_d\right]
        \le d(n+1)\exp(-\theta u+g_R(\theta)\sigma^2).
\]
It remains to note that choosing $\theta = u / (\sigma^2 + R u / 3) < 3/R$ and plugging it into the above bound yields that
$$\Prb\!\left[\exists i \leq n:\lambda_{\max}(\YY_i)\ge u \land \VV_i \preceq \sigma^2 I_d\right]
        \le d(n+1)\exp\!\left(-\frac{u^2/2}{\sigma^2+Ru/3}\right).$$
This proves the theorem.
\end{proof}

\section{Online Sparsification for Graphs}\label{sec:graphs}
In this section, we present a fast and memory efficient implementation of the online sparsification algorithm from \Cref{algo:main_algo}, for the special case where the rows $\aa_i$ arriving to use are edges of a weighted undirected graph $G = (V, E, w)$. To be more precise, our online algorithm maintains a nearly-linear sized sparsifier using only nearly-linear memory in the number of vertices $|V|$. The total run-time of our algorithm is also nearly-linear in the length of the stream $m$. Our main result is summarized in \Cref{thm:graph-sparsifier}.

Throughout this section, we assume that the graph $G$ has weights in $[1, W]$. We denote by $G_i$ the graph $G$ after the first $i$ edges have been revealed to us. We emphasize again that the graph $G_i$ is itself random and depends on the sampling outcomes of our online algorithm.

We denote by $\Tilde{G}_i$ the maintained online sparsifier of $G_i$ after processing the $i$'th edge insertion. Given a graph $G$ with incidence matrix $B \in \R^{E \times V}$ and diagonal weight matrix $\WW = \diag(\ww)$, we denote by $\LL = \BB^{\top} \WW \BB$ the Laplacian matrix associated with $G$. In spectral graph sparsification, we are interested in finding a sparse weight vector $\widetilde{\ww}$ such that $\Tilde{\LL} = \BB^T \widetilde{\WW} \BB$ is a $(1 \pm \epsilon)$ spectral approximation of $\LL$.

\begin{theorem} \label{thm:graph-sparsifier}
Let $G = (V, E, w)$ be a graph with edge weights in $[1, W]$ whose edges $\bb_{e_1}, \bb_{e_2} \ldots, \bb_{e_m}$ arrive to us in a stream, and let $\epsilon \in (0, 1)$, $c > 0$ be parameters. Then, there is an adversarially robust online algorithm that maintains an online sparsifier $\Tilde{G}_i \subseteq G_i$ such that for all $i \in [m]$,
\[
    (1-\eps)\LL_i \pe \Tilde{\LL}_i \pe (1+\eps) \LL_i
\]
where $\LL_i$ and $\Tilde{\LL}_i$ are Laplacians of $G_i$ and $\Tilde{G}_i$, the graph and the maintained sparsifier after $i$ updates.
Throughout all updates, the algorithm satisfies the size of $\Tilde{G}$ is $\Tilde{O}(n\eps^{-2} \log W)$, requires $\Tilde{O}(n \eps^{-2} \log W)$ memory, and takes total time $\Tilde{O}((m+n) \eps^{-2} \log W)$.
All results hold with probability $1-\frac{1}{m^c}$ over the entire stream.
\end{theorem}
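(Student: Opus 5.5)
The plan is to implement \Cref{algo:main_algo} for the edge stream and then bound the cost of computing the sampling probabilities $p_i$. Correctness is handled by \Cref{thm:main}, and the size bound follows the argument of \Cref{thm:main2}. For graphs with weights in $[1,W]$ the sum of online leverage scores is $O(n\log(nW))$, either by a graph variant of \Cref{thm:online_lev_sum} or because $\kappa^{\OL}$ is polynomial in $nW$. So the only real task is to compute, for each arriving edge $e_i=(u,v)$ of weight $w_i$, an overestimate of $w_i\bb_{uv}^\top(\Tilde{\LL}_{i-1}+w_i\bb_{uv}\bb_{uv}^\top)^\dag\bb_{uv}$. The overestimate must be within a constant (or $\polylog$) factor of the true value, so that the number of samples only grows by that factor, and it must be computed quickly and in small memory. \Cref{thm:main} explicitly allows any $p_i$ that is at least the stated quantity, so overestimation is safe for correctness.

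To compute the estimate, I would bound effective resistances by a spanner-type argument. I bucket the sparsifier's edges into $O(\log(nW/\eps))$ weight classes, since sampled edges carry weight $w_e/p_e$. Within the sparsifier $\Tilde{G}_{i-1}$, I maintain a collection of $O(\log n)$-spanners, one per weight class, of the unweighted structure restricted to that class. Each spanner can be maintained online with $O(n^{1+1/k})$ or $\Otil(n)$ edges using the greedy incremental spanner. An edge is added to the spanner only if its endpoints are currently far apart, and that check is a bounded-depth BFS in a graph of $\Otil(n)$ edges. The number of edges actually inserted into $\Tilde{G}$ is small, so the total spanner work stays bounded. For the arriving edge $(u,v)$, if some weight class of the sparsifier has a path of length $\ell=O(\log n)$ between $u$ and $v$ with all weights at least $w'$, then $R_{\Tilde{G}}(u,v)\le \ell/w'$. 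This gives the upper bound $w_i R \le w_i\ell/w'$, and I cap it at $1$ to handle the $+w_i\bb\bb^\top$ term. I need this estimate to be within an $O(\log n\cdot\log(nW))$ factor of the true online leverage score. The argument is that a low true resistance forces many edge-disjoint or heavy short connections, and the spanner preserves distances up to $O(\log n)$ within each class. This is the standard argument used for spanner-based sparsifiers (Koutis--Xu, Abraham et al.).

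The accounting then goes as follows. The sample count picks up an extra $\polylog$ factor from the estimate, giving $\Otil(n\eps^{-2}\log W)$. Memory equals the sparsifier plus the spanners, which is of the same order. For time, the spanner query costs $\Otil(1)$ per arriving edge in amortized terms, or $\Otil(\eps^{-2}\log W)$ with the bounded-BFS approach charged to inserted edges, for total $\Otil((m+n)\eps^{-2}\log W)$. For failure probability $m^{-c}$, I set $\rho=O(c\,\eps^{-2}\log m)$ in \Cref{thm:main}, which goes through unchanged. Adaptivity is not an issue, because the estimate is a deterministic function of $\Tilde{G}_{i-1}$ and is therefore $\cF_{i-1}$-measurable, exactly as \Cref{thm:main} requires.

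I expect the main obstacle to be the lower-bound direction: showing that the spanner-based overestimate is not much larger than the true online leverage score, so that the size bound survives. My first attempt is to compare against the leverage-score sum through a per-weight-class charging argument. In a given class, the edges that were sampled with large probability because their endpoints were far apart in the spanner themselves get added to that class's spanner. Each spanner has only $\Otil(n)$ edges, and there are $O(\log(nW))$ classes, so the sum of the estimates is $\Otil(n\log W)$ up to the $\rho$ factor. This bounds the total sampled size directly, without needing a per-edge comparison to the true leverage score.
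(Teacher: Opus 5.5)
Your correctness argument is fine: \Cref{thm:main} allows any overestimate, and a deterministic function of $\tilde G_{i-1}$ is $\cF_{i-1}$-measurable. The gap is in the size bound, and it is not just a lost polylogarithmic factor: your estimator never discards anything.

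A single $u$--$v$ path of length $\ell\ge 1$ with weights at least $w'$ only certifies $R_{\tilde G}(u,v)\le \ell/w'$, so your estimate is never below $w_i/w'$. Take an unweighted stream, say the edges of $K_n$. The first estimates are $1$, so $p_i=1$, so every edge enters $\tilde G$ with weight $1/p_i=1$. By induction all sparsifier weights stay $1$ and every estimate is $\min\{1,\ell\}=1$. Hence $\tilde G_i=G_i$ throughout, with $\binom n2$ edges.

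The underlying reason is that small effective resistance comes from many parallel routes, or even from many long ones ($k$ disjoint paths of length $\ell$ give resistance $\ell/k$). One spanner per weight class exposes at most one route, so spanner distances cannot give per-edge estimates within a polylogarithmic factor of $\tau^{\OL}$. Your charging fallback breaks on the same example. All $\binom n2$ edges are sampled with probability $1$, but each spanner has only $\Otil(n)$ edges, so almost none of them are charged. They are sampled because $w_i\ell/w'\ge 1/\rho$ whenever $w'$ is comparable to $w_i$, not because their endpoints are far apart in the spanner.

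The missing idea, which the paper uses, is to stop estimating individual leverage scores and instead certify a \emph{uniform} upper bound with many edge-disjoint spanners. By \Cref{lem:bundle-lev}, every edge outside a $t$-bundle $\alpha$-spanner has leverage score at most $\alpha/t$. With $t=\Theta(\alpha L^2\eps^{-2}\log(nLm))$, keeping bundle edges and sampling every other edge with probability $1/4$ (weight multiplied by $4$) is valid oversampling. So \Cref{thm:main} applies at each of the $L=O(\log(m/n))$ levels with accuracy $\eps/(2L)$ (\Cref{lem:graph-levels-approx}), and the levels compose (\Cref{lem:graph-approx}). The size is then bounded directly rather than through the online leverage-score sum: $O(Lt\cdot N_{\calA})$ bundle edges plus $O(m/4^L+\log m)$ edges surviving all levels, via \Cref{lma:scalar_freedman} (\Cref{lem:graph-size}).

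Your running-time claim has a second gap. A bounded-depth BFS in a graph with $\Otil(n)$ edges can touch all of them, and you need one for every arriving edge and every weight class, not only for inserted edges. That gives $\Otil(mn)$ total time. The paper obtains near-linear time by running the greedy-spanner test against the deterministic incremental APSP structure of \Cref{thm:det_apsp}, which yields the online spanner of \Cref{thm:det_spanner}.
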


Our algorithm is based on the spectral sparsification algorithm of \cite{KoutisX16} and its fully dynamic variant \cite{AbrahamDKKP16}.
In their algorithms, a bundle of polylogarithmically many spanners (see \Cref{def:tbundle}) is used to certify a large number of edges with low leverage scores.
Each such half-sparsification step then reduces the number of edges by roughly half (hence the name), resulting in a total of $O(\log m)$ such steps.
To ensure that this bundle of spanners changes in a controlled way as new edges are inserted, we need to consider \emph{online} spanners for this task.

\begin{definition}[$t$-bundle spanners, \cite{KoutisX16,AbrahamDKKP16}] \label{def:tbundle}
   A $t$-bundle $\alpha$-spanner of an undirected graph $G$ is a union of $B = \bigcup_{j=1}^t H_j$ of a sequence of graphs $H_1, \ldots, H_t$ such that, for every $1 \le j \le t$, $H_j$ is an $\alpha$-spanner of $G \setminus \bigcup_{k=1}^{j-1} H_k$.
\end{definition}

The following lemma shows why $t$-bundle spanners are so useful: it allows us to deduce upper bounds on the leverage scores of edges that are in $G \setminus B$.

\begin{lemma}[\cite{KoutisX16}] \label{lem:bundle-lev}
Let $G$ be a graph and $B$ be a $t$-bundle $\alpha$-spanner of $G$.
For every edge $e \in G \setminus B$,
\[
    \tau_G(e) := \ww_e \bb_e^\top \LL_G^{\dag} \bb_e \le \frac{\alpha}{t}
\]
\end{lemma}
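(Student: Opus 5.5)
The plan is the standard effective-resistance argument via Rayleigh's monotonicity principle. Fix an edge $e=(u,v)\in G\setminus B$ with weight $\ww_e$. First observe that $\bb_e^\top \LL_G^{\dag}\bb_e$ is the effective resistance $R_G(u,v)$ between $u$ and $v$. Since $\ww_e\bb_e^\top\LL_G^\dag\bb_e\le 1$ is nonnegative and the leverage score is monotone under removing edges other than $e$ (removing edges only increases effective resistance, i.e. $\LL_{G'}\pe\LL_G$ implies $\bb_e^\top\LL_G^\dag\bb_e\le\bb_e^\top\LL_{G'}^\dag\bb_e$ for $\bb_e$ in the range), it suffices to exhibit a subgraph $G'\subseteq G$ with $\ww_e R_{G'}(u,v)\le \alpha/t$.

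For the construction, I use that for each $j\le t$, the edge $e$ lies in $G\setminus\bigcup_{k<j}H_k$ because $e\notin B$. So $H_j$, being an $\alpha$-spanner of that graph, contains a $u$--$v$ path $P_j$ with length $d_{H_j}(u,v)\le\alpha\cdot \ww_e^{-1}$, where lengths are the resistances $1/\ww_f$ (the natural metric for spanners of weighted graphs in this setting; I take this as the convention of \cite{KoutisX16}). The paths $P_1,\dots,P_t$ are edge-disjoint since the $H_j$ are built on successively removed edge sets. Let $G'=\bigcup_j P_j\subseteq G$.

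The key step is then bounding $R_{G'}(u,v)$. A path in series has resistance equal to its length, so $R_{P_j}(u,v)\le \alpha/\ww_e$. The $t$ edge-disjoint paths (possibly sharing vertices) form a network whose effective resistance is at most that of the $t$ paths placed in parallel with internal vertices split apart. This holds because identifying vertices only decreases resistance (shorting, again Rayleigh monotonicity), so $R_{G'}(u,v)\le\bigl(\sum_j R_{P_j}^{-1}\bigr)^{-1}\le \alpha/(t\,\ww_e)$. Multiplying by $\ww_e$ gives $\tau_G(e)\le\alpha/t$. Equivalently, this can be phrased algebraically as $\LL_G\succeq \sum_j \LL_{P_j}$ together with the path inequality $\bb_e\bb_e^\top\pe (\sum_{f\in P}\ww_f^{-1})\LL_P$, which gives $\ww_e \bb_e\bb_e^\top \pe (\alpha/t)\LL_G$ and hence the claim.

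The only delicate point is this parallel/shared-vertex step. To avoid any subtlety about shared vertices, I would use the algebraic form, summing the per-path inequalities over $j$, since that form sidesteps it entirely.
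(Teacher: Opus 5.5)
Your proposal is correct and is essentially the standard Koutis--Xu argument that the paper invokes by citation; the paper gives no proof of its own. As in that argument, each $H_j$ supplies a $u$--$v$ path of resistance at most $\alpha/\ww_e$, the $t$ paths are edge-disjoint, and Rayleigh monotonicity (or your cleaner algebraic form $\ww_e\bb_e\bb_e^\top \pe \frac{\alpha}{t}\LL_G$, obtained by summing the path inequalities) gives $\tau_G(e)\le\alpha/t$. You are also right to state explicitly that spanner distances must use edge lengths $1/\ww_f$, a convention the paper leaves implicit.
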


We can now present our algorithm which assumes black-box access to a deterministic online spanner algorithm. For the rest of this section, we let $\calA$ be a deterministic online $\alpha$-spanner algorithm such that over $m$ insertions, the function $N_{\calA}(|V|,\alpha,W) \ge |V|$ is an upper bound on the number of edges of the spanner, which we assume here does not depend on the edge insertions $m$. This is the case for most online spanner algorithms. Additionally, we set $U_{\calA}(|V|,\alpha, W, m)$ to be the total work, and $M_{\calA}(|V|,\alpha,W,m)$ to be its total memory.
Note that it is always satisfied that $M_{\calA}(|V|,\alpha,W,m) \ge N_{\calA}(|V|,\alpha,W)$.

\begin{lemma} \label{lem:tbundle-online}
For every $t \ge 1$, given a deterministic online $\alpha$-spanner algorithm $\calA$, there is a deterministic online algorithm for maintaining a $t$-bundle $\alpha$-spanner $B$ of a weighted graph $G$ on $n$ vertices and bounded edge weight $W$ of size $O(t \cdot N_{\calA}(|V|,\alpha,W))$.

The algorithm has total running time $O(t \cdot U_{\calA}(|V|,\alpha,W,m))$ and uses memory $O(t \cdot M_{\calA}(|V|,\alpha,W,m))$, both over $m$ updates.
\end{lemma}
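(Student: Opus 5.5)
We maintain $t$ independent copies $\calA_1,\ldots,\calA_t$ of the deterministic online spanner algorithm, arranged in a cascade. Copy $\calA_j$ is fed the edges that were not absorbed by the earlier layers. The spanner of copy $j$ is denoted $H_j$; it is the current output of $\calA_j$ and only ever grows, since online spanners never delete edges.

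When a new edge $e$ arrives, we first insert it into $\calA_1$. If $\calA_1$ adds $e$ to $H_1$, we stop. Otherwise $e$ has been rejected by layer 1, so it is an edge of $G\setminus H_1$, and we insert it into $\calA_2$. We continue in this way: $e$ is passed to $\calA_{j+1}$ exactly when layers $1,\ldots,j$ all declined it. If all $t$ layers decline $e$, it is recorded as lying in $G\setminus B$, where $B=\bigcup_j H_j$.

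Correctness is proved by induction on the layer index, using the following invariant. At all times, the edge sequence fed to $\calA_j$ is exactly the edge set of $G\setminus\bigcup_{k<j}H_k$, in arrival order.

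The subtle point is whether an edge passed on to layer $j$ could later join some $H_k$ with $k<j$, which would make the input to $\calA_j$ wrong. This cannot happen. Each $\calA_k$ decides irrevocably on an edge at the moment that edge is inserted into it, and it never retroactively adds edges it previously rejected. So once $e$ leaves layer $k$, it stays in $G\setminus\bigcup_{k'\le k}H_{k'}$ forever.

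Therefore each $\calA_j$ runs on a legitimate online insertion stream, namely the stream of $G\setminus\bigcup_{k<j}H_k$. Its guarantee then says that $H_j$ is an $\alpha$-spanner of this graph at every time. This is precisely Definition~\ref{def:tbundle}. The weights of all edges remain in $[1,W]$, and the vertex set is unchanged, so every copy operates within the parameters for which $N_\calA$, $U_\calA$ and $M_\calA$ are defined.

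Determinism is inherited from the $\calA_j$. This matters because it makes the bundle oblivious to the sampling randomness, and hence safe against the adaptive adversary later.

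For the resource bounds:
\begin{itemize}
\item Size: each $H_j$ has at most $N_\calA(|V|,\alpha,W)$ edges, so $|B|=O(t\cdot N_\calA)$.
\item Time and memory: each copy $\calA_j$ sees at most $m$ insertions, so its total work is at most $U_\calA(|V|,\alpha,W,m)$ and its memory at most $M_\calA(|V|,\alpha,W,m)$. This uses monotonicity of these functions in $m$, which is the one point needing a brief remark.
\item Routing overhead: passing an edge down the cascade costs $O(t)$ per edge, so $O(tm)$ in total. This is absorbed by $t\cdot U_\calA$, because processing each insertion costs $\calA$ at least constant work.
\end{itemize}
Summing over the $t$ copies gives the stated $O(t\cdot U_\calA)$ time and $O(t\cdot M_\calA)$ memory.

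The main obstacle is the invariant argument in the second paragraph: that the "remainder" graph seen by each layer is itself a valid online stream. This relies on the irrevocability of online spanner decisions, and I would state that property explicitly as part of the definition of a deterministic online spanner algorithm.
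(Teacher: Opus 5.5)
Your proposal is correct and follows the paper's proof. You run $t$ copies of $\calA$ in a cascade, pass each new edge down until some copy keeps it, and use the online (irrevocable) nature of $\calA$ to conclude that each copy sees a valid insertion stream of at most $m$ edges. Your explicit invariant, the monotonicity-in-$m$ remark, and the routing-overhead accounting just spell out steps the paper leaves implicit.
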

\begin{proof}
We run $t$ copies of the algorithm $\calA$ and let $H_i$ denote the spanner maintained by the $i$th copy.
For each $i$, set $H_i$ be an $\alpha$-spanner of $G \setminus \bigcup_{j\le i-1} H_i$.
Then, the $t$-bundle spanner is simply a union of these spanners, i.e., $B = \bigcup_{i=1}^t H_i$.

Given an edge insertion of $e$, the algorithm simply inserts the edge in a sequence from $i=1$ to $t$ until the edge is added to the spanner $H_i$. Since $\calA$ is online, no edge is ever removed from $H_i$ for every $i$. Thus, $B$ is also online and, for each $i$, the spanner $H_i$ receives at most $m$ edges insertions.

Our claimed size, total time, and memory guarantees then directly follows. It is also easy to see that the algorithm is deterministic as long as $\cal A$ is.
\end{proof}

\begin{algorithm}[!ht]
\caption{$\textsc{OnlineGraphSparsification}(G, \eps)$}
\label{algo:graphsparsify}
\DontPrintSemicolon
\textbf{Input}: Adversarial stream of weighted edges $e_1, e_2, \ldots, e_m \in V \times V$ with edge weights $w_{e_1}, \ldots w_{e_m} \in [1,W]$.\;
\textbf{Output}: $(1 \pm \eps)$-spectral approximation $\Tilde{G}_i$ of $G_i$\; 
$L \gets \max\left(2,\ceil{\log_4 \frac{m}{|V|}}\right)$, $t \gets \ceil{80 (c+3)\alpha L^2 \eps^{-2}\log(|V|Lm)}$\;
Initialize online $t$-bundle spanner $B^{(j)}$ for every $j \in [L]$ \hfill \Comment{\Cref{lem:tbundle-online}}
Let $\tilde{G}_i$ be $G_i$ reweighted by $\Tilde{\ww}_i$ always.\;
\For{$i = 1, 2, \ldots, m$}{
    $\tilde{w}_{e_i} \gets w_{e_i}$\;
    \For{$j = 1,2 \ldots, L$}{ \label{line:graphsparsifty_inner_start}
        $B^{(j)} \gets B^{(j)}$ with edge update $e_i$ and current weight $\tilde{w}_{e_i}$\;
        \If{$e_i \not\in B_j$}{
            $\tilde{w}_{e_i} \gets 4\tilde{w}_{e_i}$ with probability $1/4$\;
            Otherwise, $\tilde{w}_{e_i} \gets 0$\;
        }
        \If{$e_i \in B_j$ or $\tilde{w}_i = 0$}{
            \Break
        }
    }
}
\end{algorithm}

Algorithm~\ref{algo:graphsparsify} is the pseudo-code of our online spectral graph sparsification algorithm.
In our main theorem regarding graph sparsification, \Cref{thm:graph-sparsifier}, we use for $\calA$ the following deterministic online spanner algorithm in \Cref{thm:det_spanner}. A proof of this result using results on incremental APSP data structures is given in Appendix \ref{sec:spanner}.
\begin{theorem}\label{thm:det_spanner}
    There is a deterministic online algorithm for computing a $\Otil(1)$-spanner $H$ of $G$. The spanner $H$ has size $O(n \log W)$, and and the algorithm takes total time $O(m \log W \log \log n + n \log W \log^6 n \log \log n)$. It requires $O(n \log W \log^6 n \log \log n)$ memory.
\end{theorem}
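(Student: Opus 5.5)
The statement to prove is \Cref{thm:det_spanner}: a deterministic online (insertion-only, edges never removed) $\Otil(1)$-spanner of size $O(n\log W)$ with near-linear total time. My plan is the classical greedy spanner, run online and separately in each weight class, with an incremental approximate distance oracle answering the greedy test.

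\textbf{Step 1 (weight classes).} Partition the edges into $O(\log W)$ classes $E_k$ with weights in $[2^k,2^{k+1})$. For each class, maintain an unweighted spanner $H_k$ of the graph $G_k=(V,E_k)$ treated as unweighted, and output $H=\bigcup_k H_k$. Stretch is preserved up to a factor $2$: each edge $e=(u,v)\in E_k$ has a $u$--$v$ path in $H_k$ with at most $\alpha$ hops. Every edge on that path has weight below $2^{k+1}\le 2w_e$, so the weighted stretch of $e$ is at most $2\alpha$. Edge-wise stretch then implies stretch for all vertex pairs. Hence it suffices to show, for one class, an online $\Otil(1)$-spanner with $O(n)$ edges, running in time $O(m_k\log\log n+n\log^6 n\log\log n)$ and using $O(n\log^6 n\log\log n)$ memory. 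Summing over the $O(\log W)$ classes then gives the claimed bounds.

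\textbf{Step 2 (online greedy with an incremental distance oracle).} Within one class, keep an incremental data structure on $H_k$, which only ever receives edge insertions. It must answer queries ``is $d_{H_k}(u,v)\le \beta$?'' approximately: it returns \emph{yes} whenever $d_{H_k}(u,v)\le\beta$, and \emph{no} only if $d_{H_k}(u,v)> \beta/\gamma$ for some polylog factor $\gamma$. When edge $(u,v)$ arrives, query it with $\beta=\Theta(\log n)\cdot\gamma$. On \emph{yes}, discard the edge: $H_k$ already contains a path of $O(\beta)=\Otil(1)$ hops, and since $H_k$ only grows this certificate stays valid forever. On \emph{no}, insert the edge into $H_k$ and into the data structure. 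This makes $H_k$ an $\Otil(1)$-spanner, and online by construction.

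\textbf{Step 3 (size bound).} Each inserted edge joins two vertices at current distance greater than $\beta/\gamma=\Theta(\log n)$ in $H_k$. So the final $H_k$ has girth greater than $c\log n$, and the Moore bound gives $|H_k|=O(n)$ for a suitable constant $c$. Here the threshold $\beta$ must be tuned against the oracle's approximation factor, which is why the stretch ends up as $\Otil(1)$ and not $O(\log n)$.

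\textbf{Step 4 (the data structure, the main obstacle).} I would invoke a deterministic incremental approximate APSP / bounded-distance oracle from the literature, with polylogarithmic approximation, $O(\log\log n)$ query time, and total update time $\Otil(\text{\#insertions})$. Because $H_k$ receives only $O(n)$ insertions, the update cost is charged to $n$, giving the $n\log^6 n\log\log n$ term, while each of the $m_k$ queries costs $O(\log\log n)$, giving the $m\log\log n$ term. The memory bound depends on the oracle storing only $\Otil(n)$ words, and the delicate part is matching the exact polylog exponents. If no off-the-shelf structure gives $O(\log\log n)$ queries, a fallback is to maintain a deterministic incremental low-diameter clustering, or a connectivity-style structure at $O(\log n)$ scales. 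That route yields the same polylog stretch but possibly weaker log factors. I expect this step to rest entirely on citing the right data structure.
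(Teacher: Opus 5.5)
Your proposal is correct and follows the paper's proof essentially step for step. The shared ingredients are: weight buckets $[2^k,2^{k+1})$; the greedy spanner that inserts $(u,v)$ only when the deterministic $\gamma$-approximate incremental APSP structure on $H$ reports a distance above $2\gamma\log n$; girth exceeding $2\log n$ together with the Moore bound for the $O(n)$ size; and $m$ distance queries against only $O(n)$ insertions per bucket for the time and memory bounds. One small correction to your Step~2 specification: the property you actually use, and which the test $\widetilde{\dist}(u,v)\le\beta$ provides, is that a \emph{yes} answer implies $d_{H_k}(u,v)\le\beta$, not that $d_{H_k}(u,v)\le\beta$ forces a \emph{yes}.
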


At a high level, Algorithm~\ref{algo:graphsparsify} remains faithful to the dynamic spectral sparsification algorithm by \cite{AbrahamDKKP16}. Conceptually, for each $0 \leq i \leq m$ and $j \in [L]$, we let $\ww_{e_i}^{(j)}$ be the edge weight $\tilde{w}_{e_i}$ after the $j$'th iteration of the inner loop at line~\ref{line:graphsparsifty_inner_start}. This allows us to define, for every $j \in [L]$, the graph $G^{(j)}$ to be the graph $G$ but with weights $\ww^{(j)}$. In this notation, we can write for our final sparsifier $\Tilde{G}$ that $\Tilde{G} = G^{(L)}$, and the graph $G$ is simply equal to $G^{(0)}$. We can then view the bundles $B^{(j)}$ to be online spanner bundles of the graphs $G^{(j-1)}$. As usual, we can then write for example $G^{(j)}_i$ to denote the graph $G^{(j)}$ after the first $i$ edges of the stream have been processed.

Our algorithm can be viewed as repeated half-sparsification of the $G^{(j)}$s.
We show in \Cref{lem:graph-levels-approx} that each $G^{(j+1)} \cup B^{(j)}$ is a good spectral approximation of $G^{(j)}$ throughout all updates. 

The following \Cref{lem:graph-online-levels} summarizes a few basic properties of our algorithm that are clear by construction.

\begin{lemma} \label{lem:graph-online-levels}
For all $j \in [L]$, the graph $G^{(j)}$ only receives edges insertions.
Once an edge is added to $G^{(j)}$, this edge remains in $G^{(j)}$ with the same edge weight.
\end{lemma}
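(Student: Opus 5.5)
The plan is an induction over the time index $i$ of the stream, in which we track how the edge $e_i$ is processed by the inner loop of \Cref{algo:graphsparsify}. For each level $j$, the edges of $G^{(j)}_i$ are exactly those of $G^{(j)}_{i-1}$ plus possibly the edge $e_i$. The edge $e_i$ appears in $G^{(j)}$ precisely when its processing reached level $j$ and it ended that iteration with a nonzero weight $\ww^{(j)}_{e_i}$. We must then show two things. First, no previously present edge changes weight or disappears. Second, the weight of $e_i$ at level $j$, once fixed in iteration $i$, is never revisited.

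The key observation for the first point is that the outer loop at time $i$ only ever writes to the single variable $\tilde{w}_{e_i}$. All earlier edges $e_k$, $k<i$, have their values $\ww^{(j)}_{e_k}$ determined entirely during iteration $k$, and no later iteration touches them. For the new edge, the value $\ww^{(j)}_{e_i}$ is by definition the value of $\tilde{w}_{e_i}$ after the $j$th inner iteration. If the loop breaks before level $j$, we set $\ww^{(j)}_{e_i}$ to the final value: $0$ if the edge was discarded, or the current weight if $e_i\in B^{(j')}$ at some level $j'<j$.

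The convention in the break case needs care. We have to check that $G^{(j)}$ is consistent with the claim, meaning that "edge added" is read as "$\ww^{(j)}_{e_i}>0$". After the break, the edge is either absent at every later level or kept with a fixed weight. In both cases the value assigned at time $i$ is final, so the edge is inserted once and never modified.

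Second, we handle the spanner bundles. Each $B^{(j)}$ is the online $t$-bundle from \Cref{lem:tbundle-online}, and the update to $B^{(j)}$ is fed the current weight $\tilde{w}_{e_i}$, which equals $\ww^{(j-1)}_{e_i}$. The bundle never deletes edges, since the underlying spanner algorithm $\calA$ is online. So $B^{(j)}$ also sees only insertions with permanent weights.

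The only real obstacle is bookkeeping: making precise the meaning of $G^{(j)}$ for edges whose processing stopped early. We would fix this convention explicitly and then verify it line by line against the pseudocode.
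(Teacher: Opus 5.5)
Your proposal is correct and follows the paper, which gives no proof beyond noting that the lemma is clear by construction. Your observations are exactly what the paper leaves implicit: iteration $i$ writes only to $\tilde{w}_{e_i}$, so every $\ww^{(j)}_{e_k}$ with $k<i$ is frozen, and you fix an explicit convention for the levels after an early break. One assumption is worth stating, and the paper also makes it tacitly: each stream element must be treated as its own (possibly parallel) edge, since otherwise a repeated vertex pair would overwrite, and possibly zero out, an earlier weight.
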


\begin{lemma} \label{lem:graph-levels-approx}
Given at most $m \ge 10$ adversarial edge insertions, for any $j \in [L]$, the graph $G^{(j)} \cup B^{(j)}$ is a $(1\pm \frac{\eps}{2L})$-spectral sparsifier of $G^{(j-1)}$ with probability $1-\frac{1}{2Lm^{c+1}}$ throughout all $m$ updates. 
\end{lemma}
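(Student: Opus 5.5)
We will reduce \Cref{lem:graph-levels-approx} to \Cref{lem:fixed-freedman}, just as in the proof of \Cref{thm:main}. Fix a level $j\in[L]$ and write $\KK_i$ for the Laplacian of $G^{(j-1)}_i$, and $\tilde{\KK}_i$ for the Laplacian of $G^{(j)}_i\cup B^{(j)}_i$. An edge belonging to $B^{(j)}$ keeps its weight $\ww^{(j-1)}_e$. An edge not in $B^{(j)}$ gets weight $4\ww^{(j-1)}_e$ with probability $1/4$ and weight $0$ otherwise.

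We use the filtration in which $\cF_i$ contains all random coins used up to the $i$th insertion. We may assume the adversary is deterministic. Since the spanner algorithm is deterministic and online (\Cref{lem:tbundle-online}), whether $e_i\in B^{(j)}$ is decided when $e_i$ arrives, given its current weight, and this decision never changes afterwards. So the decision is $\cF_{i-1}$-measurable together with the coins of the lower levels for edge $e_i$. To handle this, we refine the filtration so that each level-$j$ coin is a separate step. By \Cref{lem:graph-online-levels}, $G^{(j-1)}$ only receives insertions. Hence $\KK_{i-1}\preceq\KK_i$, and the matrix $\CC_i=\KK_i^{\dag/2}\KK_{i-1}^{1/2}$ satisfies $\CC_i\CC_i^\top\preceq\II$.

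Define $\EE_i=\KK_i^{\dag/2}(\tilde{\KK}_i-\KK_i)\KK_i^{\dag/2}$. This satisfies the recursion \eqref{eq:E-recursion} with
\[
\ZZ_i=z_i\,\ww^{(j-1)}_{e_i}\KK_i^{\dag/2}\bb_{e_i}\bb_{e_i}^\top\KK_i^{\dag/2},
\]
where $z_i\in\{3,-1\}$ has mean zero if $e_i\notin B^{(j)}$, and $z_i=0$ if $e_i\in B^{(j)}$ or if the edge never reaches level $j$.

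The key point is the bound on the leverage scores. By \Cref{lem:bundle-lev} applied to the graph $G^{(j-1)}_i$ at the time $e_i$ arrives, every non-bundle edge has $\tau_i:=\ww_{e_i}\bb_{e_i}^\top\KK_i^\dag\bb_{e_i}\le\alpha/t$. This holds because $B^{(j)}_i$ is a $t$-bundle $\alpha$-spanner of $G^{(j-1)}_i$, and it holds uniformly over time since the current graph is exactly what the spanner has seen.

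This gives $R\le 3\alpha/t$, and by symmetry the same for $-\ZZ_i$. For the variance, $\E[\ZZ_i^2\mid\cF_{i-1}]\preceq 3\tau_i\vv_i\vv_i^\top\preceq (3\alpha/t)\vv_i\vv_i^\top=:\WW_i$. The telescoping computation from \eqref{eq:variance-budget} then gives $\VV_i\preceq (3\alpha/t)\II$ deterministically.

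We apply \Cref{lem:fixed-freedman} to $\pm\EE_i$ with $u=\eps/(2L)$ and $\sigma^2=R=3\alpha/t$. The failure probability is at most
\[
2|V|(m+1)\exp\bigl(-\Omega(\eps^2 t/(\alpha L^2))\bigr).
\]
With $t=\ceil{80(c+3)\alpha L^2\eps^{-2}\log(|V|Lm)}$ this is at most $1/(2Lm^{c+1})$, after checking the constants with $m\ge10$. Unlike \Cref{thm:main}, no stopping time is needed here, because the leverage-score bound comes deterministically from the spanner rather than from an estimate.

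The main obstacle is the measurability and adaptivity bookkeeping. The spanner decision for $e_i$ at level $j$ depends on the weight $\tilde w_{e_i}$, which depends on the coins of levels $1,\dots,j-1$ for the same edge. We must order the filtration steps as (edge $i$, level $1$), \dots, (edge $i$, level $L$). Edges dropped at lower levels count as steps with $\ZZ=0$ and $\CC=\II$. Then $n$ in the bound becomes at most $mL$ steps, which the $\log(|V|Lm)$ term in $t$ absorbs.
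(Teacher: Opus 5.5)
Your proposal is correct and follows essentially the same route as the paper. \Cref{lem:bundle-lev} bounds the level-$(j-1)$ leverage score of every non-bundle edge by $\alpha/t$, so the fixed probability-$1/4$ coin is a valid leverage-score oversampling, and the evolving-normalization Freedman bound then gives the $(1\pm\frac{\eps}{2L})$ guarantee uniformly in time. The paper only checks $\rho\,\tau_i^{(j-1)}\le 1/4$ and invokes \Cref{thm:main} as a black box. You instead inline that argument and apply \Cref{lem:fixed-freedman} directly with $R=\sigma^2=3\alpha/t$ and no stopping time. You also make explicit the filtration refinement the paper leaves implicit: the lower-level coins of the same edge must precede the level-$j$ step.
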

\begin{proof}
To prove this statement, we will use the framework of \Cref{algo:main_algo} and \Cref{thm:main}. Let $e_i$ be the newly inserted edge. We start by defining 
$$\tau_i^{(j-1)} := \ww_{e_i}^{(j-1)} \bb_{e_i}^{\top} \LL_{G^{(j-1)}_i}^{\dag} \bb_{e_i}$$
to be the leverage score of the edge $e_i$ in the graph $G^{(j-1)}_i$. Now in order to prove the lemma, the only thing we need to show is that once a new edge $e_i$ gets added (adversarially) to the stream, it gets sampled into the graph $G^{(j)} \cup B^{(j)}$ with probability at least
$$
\min \setof{1, \rho \cdot \tau_i^{(j-1)}}
$$
for $\rho = O(L^2 \epsilon^{-2} \log n)$ large enough. If $e_i$ gets inserted into $B^{(j)}$, then this corresponds to sampling with probability $1$, in which case we have certainly sampled appropriately. On the other hand, if $e_i$ does not get inserted into $B^{(j)}$, then by \Cref{lem:bundle-lev} it must be the case that 
$$
\tau_i^{(j-1)} \leq \frac{\alpha}{t} \leq \frac{\epsilon^2}{80(c+3) L^2 \log (|V| L m)}.
$$
Consequently, we have that 
$$
\rho \tau_i^{(j-1)} \leq \frac{1}{4},
$$
and since we sample the edge $e_i$ into $G^{(j)}$ with probability exactly $1/4$, this constitutes valid leverage-score oversampling. We can now apply \Cref{thm:main} to conclude that $G^{(j)} \cup B^{(j)}$ is a $(1 \pm \frac{\varepsilon}{2L})$-spectral sparsifier of $G^{(j-1)}$ throughout all $m$ updates with probability at least $1- \frac{1}{2Lm^{c+1}}$. 

\end{proof}

The following \Cref{lem:graph-approx} follows directly from \Cref{lem:graph-levels-approx}.
\begin{lemma} \label{lem:graph-approx}
Given at most $m \ge 10$ adversarial edge insertions, Algorithm~\ref{algo:graphsparsify} maintains a weighted subgraph $\Tilde{G}_i$ such that with probability $1-\frac{1}{2m^{c+1}}$ for every $i \in [m]$, 
\[
(1-\eps) \LL_i \pe  \Tilde{\LL}_i \pe (1+\eps) \LL_i.
\]
\end{lemma}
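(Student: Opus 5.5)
We will chain the per-level guarantees of \Cref{lem:graph-levels-approx} across the $L$ levels by a union bound, composing spectral approximations multiplicatively.

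\textbf{Union bound.} Let $\mathcal{E}_j$ be the event that, throughout all $m$ updates, $G^{(j)} \cup B^{(j)}$ is a $(1\pm\frac{\eps}{2L})$-spectral sparsifier of $G^{(j-1)}$. By \Cref{lem:graph-levels-approx}, $\Pr[\mathcal{E}_j^c] \le \frac{1}{2Lm^{c+1}}$. A union bound over the $L$ levels gives that all $\mathcal{E}_j$ hold simultaneously with probability at least $1-\frac{1}{2m^{c+1}}$, which is the claimed bound. We work on this intersection for the rest of the argument.

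\textbf{Identifying the sparsifier.} Fix a time $i$. We must check that the maintained graph $\Tilde{G}_i$ is exactly what the chain of levels produces. By construction of Algorithm~\ref{algo:graphsparsify}, an edge that enters $B^{(j)}$ stops being processed at level $j$. It keeps the weight it had at level $j-1$, so it remains in $\Tilde{G}$ with that weight. An edge that is not in $B^{(j)}$ is reweighted into $G^{(j)}$, or dropped if its weight becomes $0$.

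Hence the edges of $G^{(j-1)}$ split into those in $B^{(j)}$ and those passed on to level $j$. The structure we need is therefore $G^{(j-1)} \approx B^{(j)} \cup G^{(j)}$, where $G^{(j)}$ contains only the non-bundle edges of level $j$. Iterating this decomposition,
\[
\Tilde{G}_i = B^{(1)}_i \cup B^{(2)}_i \cup \cdots \cup B^{(L)}_i \cup G^{(L)}_i ,
\]
and $G^{(0)}_i = G_i$. The only care needed here is to make this convention precise: $G^{(j)}$ should carry only the non-bundle edges, so that the bundle edges are not double counted. Consistency with \Cref{lem:graph-online-levels} ensures that the weights seen by the spanners are the weights actually stored.

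\textbf{Induction over levels.} We prove by induction on $k$ that
\[
\LL\big(B^{(1)}_i\cup\cdots\cup B^{(k)}_i \cup G^{(k)}_i\big) \approx_{(1\pm\eps/2L)^k} \LL_i .
\]
The case $k=0$ is trivial. For the step from $k-1$ to $k$, replace $G^{(k-1)}_i$ by $B^{(k)}_i\cup G^{(k)}_i$. This is allowed because Loewner approximations add: if $\LL_X \approx_{1\pm\delta} \LL_Y$, then $\LL_X+\LL_Z \approx_{1\pm\delta} \LL_Y+\LL_Z$ for any PSD $\LL_Z$. Here $\LL_Z$ is the Laplacian of the earlier bundles $B^{(1)}_i,\dots,B^{(k-1)}_i$, which are unaffected by the replacement. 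Multiplying the factors across levels gives the claimed bound at level $k$.

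\textbf{Concluding the bounds.} At $k=L$ we obtain $(1-\frac{\eps}{2L})^L \LL_i \pe \Tilde{\LL}_i \pe (1+\frac{\eps}{2L})^L \LL_i$. For the lower side, $(1-\frac{\eps}{2L})^L \ge 1-\frac{\eps}{2}$ by Bernoulli's inequality. For the upper side, $(1+\frac{\eps}{2L})^L \le e^{\eps/2} \le 1+\eps$ for $\eps\in(0,1)$. This holds for every $i$ simultaneously on the good event, which proves the lemma.
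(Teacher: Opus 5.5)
Your proof is correct and follows the paper's argument: a union bound over the $L$ levels of \Cref{lem:graph-levels-approx}, then composing the per-level $(1\pm\frac{\eps}{2L})$-approximations into a $(1\pm\frac{\eps}{2L})^L$-approximation and bounding these factors by $1\pm\eps$. You also spell out two points the paper leaves implicit: the convention that $G^{(j)}$ carries only non-bundle edges, so that $\Tilde{G}_i = B^{(1)}_i\cup\cdots\cup B^{(L)}_i\cup G^{(L)}_i$ (the paper writes $G^{(L+1)}_i$ here), and the fact that adding the PSD Laplacian of the earlier bundles preserves a Loewner approximation.
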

\begin{proof}
    Note that $\Tilde{G}_i = G^{(L+1)}_i \cup \bigcup_{j=1}^L B^{(j)}_i$. Now assume that for all $j$, the graphs $G^{(j)} \cup B^{(j)}$ are $(1 \pm \frac{\varepsilon}{2L})$-spectral sparsifiers of $G^{(j-1)}$. Then it follows that $\Tilde{G}$ is a $(1 \pm \frac{\varepsilon}{2L})^{L}$-spectral sparsifier of $G^{(0)} = G$. We can conclude by noting that $(1+\frac{\epsilon}{2L})^L \leq (1+\epsilon)$, and $(1- \frac{\epsilon}{2L})^L \geq (1- \epsilon)$.
\end{proof}

\begin{lemma} \label{lem:graph-size}
Given at most $m \ge 10$ adversarial edge insertions, Algorithm~\ref{algo:graphsparsify} maintains a weighted subgraph $\Tilde{G}_i$ such that with probability $1-\frac{1}{2m^{c+1}}$ for every $i \in [m]$, the number of edges 
\begin{equation} \label{eq:graph-size}
    |E(\Tilde{G})| = O\left(Lt \cdot N_{\calA}(|V|,\alpha,W,m) + \frac{m}{4^L} + \log m\right).
\end{equation}
\end{lemma}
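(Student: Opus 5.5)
We bound $|E(\Tilde{G})|$ by splitting the edges of $\Tilde{G}_i = G^{(L)}_i \cup \bigcup_{j=1}^L B^{(j)}_i$ into two kinds: spanner edges, and edges that survive all $L$ rounds of sampling without entering any spanner.

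\emph{Spanner edges.} By \Cref{lem:tbundle-online}, each online $t$-bundle $B^{(j)}$ has at most $O(t\cdot N_{\calA}(|V|,\alpha,W))$ edges, deterministically and at all times, since $\calA$ is deterministic and its size bound does not depend on the stream. Summing over $j\in[L]$ gives $O(Lt\cdot N_{\calA})$.

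\emph{Surviving non-spanner edges.} These are edges $e_i$ that reach the end of the inner loop with nonzero weight and lie in no $B^{(j)}$. For such an edge, at each level $j$ it was not in $B^{(j)}$ and was kept with probability exactly $1/4$. Let $\xi_i$ be the indicator that edge $e_i$ survives all $L$ levels without entering a spanner, and take the filtration $\cF_i$ of all random bits used through the processing of edge $i$. As in the proof of \Cref{thm:main2}, we may assume the adversary is deterministic, so $e_i$ is $\cF_{i-1}$-measurable.

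The bundles are deterministic functions of the stream of updates they receive, so conditioned on $\cF_{i-1}$ the event ``$e_i\notin B^{(j)}$'' is determined by the earlier coin flips of the same edge at levels $1,\dots,j-1$. The conditional survival probability therefore factors as a product of at most $L$ factors, each equal to $1/4$ or $0$, and so
\[
p_i=\E[\xi_i\mid\cF_{i-1}]\le 4^{-L}.
\]
We now apply \Cref{lma:scalar_freedman} to $X_i=\xi_i$ with $\mu_k\le k4^{-L}\le m4^{-L}$ and $x=(c+1)\log m+\log 2$. This gives, simultaneously for all $k\le m$,
\[
S_k\le 2\mu_k+x=O\!\left(\frac{m}{4^L}+\log m\right)
\]
with probability at least $1-\frac{1}{2m^{c+1}}$. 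The ``$\exists k$'' form of the lemma is what gives the bound for every $i$ at once, with no union bound over time.

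Adding the two contributions yields \eqref{eq:graph-size}. The only subtle step is the measurability claim: the bundle membership at level $j$ depends on the random weights $\tilde w_{e_i}$ produced by the same edge's coins at lower levels. This is handled by conditioning level by level, since each coin is fresh and independent of everything that determines whether that coin is used. An equivalent route, if cleaner, is to view each survival as the final stage of a sequence of $L$ independent $\mathrm{Ber}(1/4)$ trials, which is stochastically dominated by an independent $\mathrm{Ber}(4^{-L})$ variable.
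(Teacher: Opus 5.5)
Your proposal is correct and follows the paper's proof: the $L$ online $t$-bundles are bounded deterministically via \Cref{lem:tbundle-online}, and the edges that survive all $L$ levels outside every bundle are counted with \Cref{lma:scalar_freedman}, using a conditional survival probability of at most $4^{-L}$. The only difference is cosmetic: the paper couples each edge's coins to a single uniform $X_i$ compared against the thresholds $4^{-j}$, which gives a dominating indicator that is exactly $\operatorname{Ber}(4^{-L})$, whereas you condition level by level (there the product has exactly $L$ factors, not ``at most $L$'', since survival requires passing every level, and this is what yields $4^{-L}$).
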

\begin{proof}
The first term in \eqref{eq:graph-size} is clear from the $L$ levels of $t$-bundle spanners.
We can then focus on the trailing terms.

For an edge insertion $e_i$, instead of flipping a biased coin for each iteration $j=1$ to $L$ separately, we can sample immediately after its insertion a uniform random variable $X_i \in [0,1]$.
If, at iteration $j$, $e_i$ is not added to $B_j$, then, we can check whether the biased sample is successful by comparing $X_i$ and $\frac{1}{4^j}$.
We have $e_i \in \Tilde{G}_k \setminus \bigcup_{j=1}^L B^{(j)}_k$ for all $k \ge i$ only if $X_i < \frac{1}{4^L}$, which is successful with probability $\frac{1}{4^L}$.
This further simplifies to a sampling a coin $\xi_i \sim \operatorname{Ber}(\frac{1}{4^L})$ conditionally on all previous insertions and random samples and counting $N_i = \sum_{k=1}^i \xi_i$.
Similar to the proof of \Cref{thm:main}, the scalar Freedman inequality from \Cref{lma:scalar_freedman} gives with probability $1-\frac{1}{2m^{c+1}}$ that
\[
    N_i \le N_m \le  4(c+2) \left(\log m + \frac{m}{4^L} \right),
\]
which completes our proof.
\end{proof}

\begin{theorem} \label{thm:graph-main}
There is an algorithm (Algorithm~\ref{algo:graphsparsify}) that, for any fixed constant $c \ge 1$, given an $\eps \in (0,1)$, an adversarial stream of $m$ edges $E$ of a graph $G$ supported on $V$ and edge weights $\ww \in [1,W]$ for fixed $W \ge 1$, such that for all $i \in [m]$, it maintains a reweighted subgraph $\Tilde{G}$ of $G$ with edge weights $\Tilde{\ww}$.
The algorithm satisfies with probability at least $1-\frac{1}{m^c}$ over the entire stream that 
\begin{enumerate}
\item \textbf{$(1\pm\eps)$-spectral approximation:} \label{thm:graph-main:item:approx}
\[
    (1-\eps) \LL_i \pe \Tilde{\LL}_i \pe (1+\eps) \LL_i.
\]
\item \textbf{Size:} the number of edges in $\Tilde{G}$ is at most \label{thm:graph-main:item:size}
\[
    O\left(\alpha \eps^{-2} \log(m|V|) \log^3\frac{m}{|V|} \cdot N_{\calA}(|V|,\alpha,W)\right).
\]
\item \textbf{Total time:} the amortized update time is \label{thm:graph-main:item:time}
\[
    O\left(\alpha \eps^{-2} \log(m|V|) \log^3\frac{m}{|V|} \cdot U_{\calA}(|V|,\alpha,W,m)\right). 
\]
\item \textbf{Memory:} the required memory of the algorithm is \label{thm:graph-main:item:mem}
\[
    O\left(\alpha \eps^{-2} \log(m|V|) \log^3\frac{m}{|V|} \cdot M_{\calA}(|V|,\alpha,W,m)\right). 
\]
\end{enumerate}
\end{theorem}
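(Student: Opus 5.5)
The proof assembles \Cref{lem:graph-approx}, \Cref{lem:graph-size}, and \Cref{lem:tbundle-online} with the parameter choices of Algorithm~\ref{algo:graphsparsify}, followed by a union bound over the failure events. Throughout, $L=\max(2,\ceil{\log_4(m/|V|)})$ and $t=\ceil{80(c+3)\alpha L^2\eps^{-2}\log(|V|Lm)}$.

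For \emph{item~\ref{thm:graph-main:item:approx}}, invoke \Cref{lem:graph-approx} directly. It gives the $(1\pm\eps)$ guarantee simultaneously for all $i\in[m]$, with failure probability $\frac{1}{2m^{c+1}}$. Its proof in turn union-bounds the $L$ per-level failures of \Cref{lem:graph-levels-approx}, each at most $\frac{1}{2Lm^{c+1}}$.

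For \emph{item~\ref{thm:graph-main:item:size}}, invoke \Cref{lem:graph-size}, which bounds the size by $O(Lt\,N_\calA + m/4^L + \log m)$. The choice of $L$ gives $m/4^L\le |V|\le N_\calA$, and $\log m = O(\log(m|V|))\le O(t\,N_\calA)$, so both trailing terms are absorbed into the first. It remains to compute $Lt$:
\[
Lt = O\bigl(\alpha\eps^{-2}L^3\log(|V|Lm)\bigr).
\]
Since $L=O(\log\frac{m}{|V|})$ up to the floor at $2$, and $\log(|V|Lm)=O(\log(m|V|))$, this yields the stated $\alpha\eps^{-2}\log(m|V|)\log^3\frac{m}{|V|}$ factor. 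When $m\le 4|V|$, one reads $\log\frac{m}{|V|}$ as at least a constant.

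For \emph{items~\ref{thm:graph-main:item:time} and~\ref{thm:graph-main:item:mem}}, note that the algorithm maintains $L$ online $t$-bundle spanners, one per level. By \Cref{lem:tbundle-online}, each uses total time $O(t\,U_\calA)$ and memory $O(t\,M_\calA)$. Each level sees at most $m$ insertions, because \Cref{lem:graph-online-levels} guarantees each $G^{(j)}$ is insertion-only. All remaining work per edge (coin flips, weight updates) is $O(L)$ per insertion, and storing $\tilde G$ costs no more than the size bound, which is in turn bounded by $Lt\,M_\calA$ since $M_\calA\ge N_\calA$. Summing over levels gives the factor $Lt$ times $U_\calA$ and $M_\calA$ respectively, matching the stated bounds.

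Finally, union-bound the two probabilistic events: the failure probability is at most $\frac{1}{2m^{c+1}}+\frac{1}{2m^{c+1}}\le m^{-c}$.

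The only subtle point is adaptivity inside the spanner levels. Each level's input stream $G^{(j-1)}$ depends on the random coins of earlier levels, so the spanner inputs are themselves random. This is harmless because the spanners are deterministic and their guarantees hold for every input sequence. Adaptivity is already handled in \Cref{lem:graph-levels-approx} via \Cref{thm:main}, which is exactly why that result is robust to adaptive streams.
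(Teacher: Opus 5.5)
Your proposal is correct and follows the same route as the paper. Accuracy comes from \Cref{lem:graph-approx}. Size, time and memory come from \Cref{lem:graph-size} and \Cref{lem:tbundle-online} with the stated choices of $L$ and $t$, where each level receives at most $m$ insertions by \Cref{lem:graph-online-levels} and the non-spanner edges are absorbed via $M_{\calA}\ge N_{\calA}$. The two $\frac{1}{2m^{c+1}}$ failure events are then union-bounded. One detail the paper states that you omit: those lemmas assume $m\ge 10$, and the paper dismisses the case $m<10$ because no sparsification is needed there.
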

\begin{proof}[Proof of \Cref{thm:graph-main}]
The spectral approximation guarantee (\Cref{thm:graph-main:item:approx}) holds with probability $1-\frac{1}{2m^{c+1}}$ by \Cref{lem:graph-approx}.
Note that when $m<10$, no sparsification is required.
By our choices of 
\begin{align*}
L &= \max\left(1,\ceil{\log_4 \frac{m}{|V|}}\right) \ge \max\left(1, \log_4 \frac{m}{|V|} \right)\\
t &= \ceil{80 (c+3)\alpha L^2 \eps^{-2}\log(|V|Lm)} = O\left(\alpha \eps^{-2} \log^2(\frac{m}{|V|}) \log(m|V|)\right),
\end{align*}
in Algorithm~\ref{algo:graphsparsify}, the size (\Cref{thm:graph-main:item:size}) and update time (\Cref{thm:graph-main:item:time}) bounds follow by \Cref{lem:graph-size,lem:tbundle-online} with probability $1-\frac{1}{2m^{c+1}}$.
In terms of memory, our algorithm stores (a) $L$ levels of $t$-bundle spanners and (b) the edges retained in the graph that are not within any $t$-bundle spanners.
For (a), we can bound the required memory using \Cref{lem:tbundle-online} again and by noting that each $t$-bundle can only ever receive $m$ edge insertions (\Cref{lem:graph-online-levels}).
The memory for (b) can be bounded simply using \Cref{thm:graph-main:item:size}.
Since $M_{\calA} \ge N_{\calA}$ under the same parameters, the memory of (b) is subsumed in asymptotics, giving us the overall memory bound in \Cref{thm:graph-main:item:mem}.
Finally, all guarantees are satisfied with probability $1-\frac{1}{m^{c+1}}$ by union bound.
\end{proof}

\Cref{thm:graph-sparsifier} follows from \Cref{thm:graph-main} by substituting the deterministic online spanner algorithm in \Cref{thm:det_spanner} for $\calA$ and setting $\alpha = \Tilde{O}(1)$.

\newpage
\printbibliography

\newpage
\appendix
\section{Deterministic Online Spanner} \label{sec:spanner}
In this section, we will briefly describe a very simple algorithm satisfying the guarantees of \Cref{thm:det_spanner}. In order to so, we will borrow tools from the incremental/dynamic graph algorithms literature. In this literature, one is concerned, as in the streaming/online model, with a graph $G$ whose edges $e_i$ are revealed to the algorithm one at a time. One is then interested in maintaining query access to some quantities of interest in the current version of the graph, $G_i$. However, the difference to streaming algorithms is that in dynamic graph algorithms the main quantity of interest is the total computational work performed, not necessarily the working memory that the algorithm requires. 

Our online algorithm works by combining the classical greedy spanner (\cite{althofer1993spars}) with the incremental APSP data structure from \cite{incremental_APSP}. The algorithm is inspired from both the slow dynamic spanner algorithms presented in \cite{low_recourse_spanner}, and the fast versions presented in \cite{kyng2025simple}.


We first introduce incremental APSP data structures, where we say that a graph $G$ is incremental if it changes by edge insertions over time.

\begin{definition}[Incremental APSP] \label{def:apsp}
    For an initially empty incremental graph $G = (V, E)$ a $\gamma$-approximate incremental APSP data structure supports the following operations.
    \begin{itemize}
        \item $\textsc{AddEdge}(u,v)$: Add edge $(u, v)$ to $G$.
        \item $\textsc{ApxDist}(u, v)$: Returns a distance estimate $\widetilde{\dist}(u,v)$ such that
        \begin{equation*}
            \dist_G(u, v) \leq \widetilde{\dist}(u,v) \leq \gamma \cdot \dist_G(u,v). 
        \end{equation*}
    \end{itemize}
\end{definition}

\begin{theorem}[\cite{incremental_APSP}]\label{thm:det_apsp}
    For an incremental graph $G$, there is a deterministic $\Otil(1)$-approximate incremental APSP data structure. After processing $m$ calls to $\textsc{AddEdge}$ and $q$ calls to $\textsc{ApxDist}$, the total work it will have done is $O(m\log n \log \log n  + q \log \log n + n \log^6 n \log \log n)$. Moreover, the memory required by the algorithm is $O(m \log n \log \log n + n \log^6 n \log \log n)$.
\end{theorem}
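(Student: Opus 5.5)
Since \Cref{thm:det_apsp} is imported from \cite{incremental_APSP}, the plan is to reconstruct it via a hierarchy of distance scales with monotonically coarsening clusterings. For every scale $k = 0, 1, \ldots, O(\log n)$, corresponding to distance threshold $2^k$ (after normalising weights, which is where the $\log W$ in \Cref{thm:det_spanner} enters), I would maintain a partition $\mathcal P_k$ of $V$ into clusters. Each cluster carries a rooted spanning tree inside $G$, and the partitions are kept \emph{nested}: every cluster of $\mathcal P_{k}$ is contained in a cluster of $\mathcal P_{k+1}$. Two invariants would be maintained throughout the insertions:
\begin{itemize}
\item[(a)] \emph{Covering:} if $\dist_G(u,v) \le 2^k$, then $u$ and $v$ lie in the same cluster of $\mathcal P_{k+c}$ for a constant $c$, or at least in clusters adjacent in a sparse cluster graph.
\item[(b)] \emph{Bounded radius:} every cluster of $\mathcal P_k$ has tree radius at most $2^k \cdot \polylog n$.
\end{itemize}
Given these invariants, $\textsc{ApxDist}(u,v)$ returns $2^{k}\polylog n$ for the smallest $k$ at which $u$ and $v$ share a cluster. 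Invariant (b) gives $\dist_G(u,v) \le \widetilde{\dist}(u,v)$, and invariant (a) gives the $\Otil(1)$ upper bound. Because the partitions are nested, the predicate ``same cluster at scale $k$'' is monotone in $k$. If each vertex stores an array of its cluster identifiers, one per scale, the answer can be found by binary search over the $O(\log n)$ scales, which gives the $O(\log\log n)$ query time.

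The key structural observation for updates is that insertions only decrease distances, so the covering requirement (a) only becomes stronger over time. I therefore want clusters to \emph{only merge}, never split. I would build scale $k$ on top of scale $k-1$: contract the clusters of $\mathcal P_{k-1}$ and consider inter-cluster edges of length at most $2^k$. When an inserted edge connects two scale-$k$ clusters that must be joined to preserve (a), I would merge the smaller cluster into the larger one, re-rooting its tree and relabelling only the smaller side. By the standard size-doubling argument, each vertex changes its label at a given scale only $O(\log n)$ times. Each edge insertion then touches $O(\log n)$ scales, with $O(\log\log n)$ bookkeeping per scale. This bookkeeping comes from a van Emde Boas/union-find style label lookup and from checking whether the edge is short relative to the scale. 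Together these account for the $O(m\log n\log\log n)$ term. Memory is linear in the stored labels and edges per scale, which gives the matching memory bound.

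The main obstacle is reconciling merge-only clusters with invariant (b). A long chain of merges can make a cluster's radius grow far beyond $2^k\polylog n$. To handle this deterministically, I would use a layered ball-growing rule. A cluster is allowed to absorb neighbours only while its radius stays below a threshold $r_k$. When absorbing a neighbour would exceed $r_k$, the cluster is instead ``promoted'': it stops growing at scale $k$ and is handed upward as a single node to scale $k+1$. The covering guarantee for such a promoted cluster is then paid for by allowing $O(\log n)$ slack in the scale index, using a region-growing argument in which the boundary is charged to a $\log n$-factor volume increase. Proving that this promotion rule still satisfies (a) up to polylogarithmic stretch, and that the total restructuring work is bounded by $n\log^{6} n\log\log n$ over all scales, is where I expect the real difficulty. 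The $\log^6 n$ presumably comes from compounding: $\log n$ scales, $\log n$ merge-recourse, $\log n$ ball-growing layers, and the tree maintenance. My first attempt would be to charge all work to label changes and bound those by size-doubling within each of the $O(\log^2 n)$ (scale, layer) pairs.
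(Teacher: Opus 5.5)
The paper gives no proof of this statement. It is imported from \cite{incremental_APSP} and used as a black box in the proof of \Cref{thm:det_spanner}, so the intended justification is the citation itself. Your reconstruction fails, and the problem is more basic than the difficulty you flag at the end. Your partitions $\mathcal P_k$ are nested, and $\textsc{ApxDist}(u,v)$ depends only on the first scale at which $u,v$ share a cluster. Hence your estimate satisfies the ultrametric inequality $\widetilde{\dist}(u,w)\le\max\{\widetilde{\dist}(u,v),\widetilde{\dist}(v,w)\}$. No ultrametric dominating $\dist_G$ can have worst-case stretch $o(n)$. Let the adversary insert a unit-weight path $v_0,\dots,v_{n-1}$. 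Domination gives $\widetilde{\dist}(v_0,v_{n-1})\ge n-1$. Iterating the ultrametric inequality then yields some $i$ with $\widetilde{\dist}(v_i,v_{i+1})\ge n-1$, although $\dist_G(v_i,v_{i+1})=1$. Equivalently, your invariants (a) and (b) are jointly infeasible for partitions. Invariant (a) with $k=0$ forces every path edge inside a single cluster of $\mathcal P_c$, so the whole path is one cluster. Its radius is about $n/2$, which violates (b). The escape clause ``or in adjacent clusters of a sparse cluster graph'' is never used by your query and would need a different query procedure altogether. A laminar family cannot give deterministic worst-case polylogarithmic stretch. The standard remedy is overlapping clusters per scale, i.e.\ a sparse neighbourhood cover: every radius-$2^k$ ball lies inside some cluster, and each vertex is in $O(\log n)$ clusters. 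Your merge-only/size-doubling accounting and the binary search over scales would then have to be redone for covers.

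Second, paying for promoted clusters with ``$O(\log n)$ slack in the scale index'' multiplies the estimate by $2^{O(\log n)}=n^{O(1)}$. That gives polynomial, not polylogarithmic, stretch; you would need slack of $O(\log\log n)$ scales. Beyond this, the two claims that carry the theorem are left unproven: the stretch guarantee for merge-only clusters of bounded radius, and the $O(n\log^6 n\log\log n)$ bound on total restructuring work. The $\log^6 n$ is only guessed by counting factors. A minor point: the $\log W$ in \Cref{thm:det_spanner} enters through weight bucketing in the reduction to unweighted graphs. \Cref{thm:det_apsp} is applied only to the unweighted graph $H$, where $O(\log n)$ scales suffice. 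So the proposal does not establish the statement, and the correct move here is simply to invoke \cite{incremental_APSP}.
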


Below is the pseudo-code of our algorithm. It is just the classic spanner from \cite{althofer1993spars}, but sped up by using an Incremental APSP data structure inside of it.

\begin{algorithm}[!ht]
\caption{$\textsc{OnlineSpanner}()$}
\label{algo:simple_spanner}
\SetKwProg{Globals}{global variables}{}{}
\SetKwProg{Proc}{procedure}{}{}
\Proc{$\textsc{Initialize}()$}{
    $H \gets (V, \emptyset)$\;
    Let $\mathcal{D}_{H}$ be the $\Otil(1)$-approximate APSP datastructure from \Cref{thm:det_apsp}.
}
\Proc{$\textsc{InsertEdge}(e = (u, v))$}{
    \label{algo:simple_spanner:if}\If{$\mathcal{D}_{H}.\textsc{ApxDist}(u,v) > \gamma \cdot 2 \cdot \log n$}{
        $H \gets H \cup (e)$; $\mathcal{D}_{H}.\textsc{AddEdge}(e)$;
    }
}
\end{algorithm}
Before we start the analysis, we need the following folklore observation. See e.g. \cite{althofer1993spars} for a proof.

\begin{observation}[Folklore] \label{obs:girth}
    Every graph $H = (V, E_H)$ with girth $2\delta + 1$ contains at most $2|V|^{1 + \frac{1}{\delta}}$ edges.
\end{observation}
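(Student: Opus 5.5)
This is the folklore Moore-type bound, and I will prove it by reducing to a graph of large minimum degree and then counting vertices in a breadth-first ball.

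\textbf{Reduction to minimum degree.} Let $n=|V|$, and suppose for contradiction that $|E_H| > 2n^{1+1/\delta}$. Repeatedly delete any vertex of degree at most $n^{1/\delta}$. Each deletion removes at most $n^{1/\delta}$ edges, and at most $n$ vertices can be deleted, so in total at most $n^{1+1/\delta} < |E_H|$ edges are removed. Hence a nonempty subgraph $H'$ survives in which every vertex has degree strictly greater than $n^{1/\delta}$, so at least $n^{1/\delta}+1$ when the threshold is handled with the appropriate floor or ceiling. Since $H'$ is a subgraph of $H$, its girth is still at least $2\delta+1$.

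\textbf{BFS counting.} Pick any vertex $r$ of $H'$ and run breadth-first search to depth $\delta$. Because $H'$ contains no cycle of length at most $2\delta$, the ball of radius $\delta$ around $r$ induces a tree: two distinct shortest paths from $r$ of length at most $\delta$ that reach the same vertex, or a non-tree edge inside the ball, would close a cycle of length at most $2\delta$. In this tree the root has at least $n^{1/\delta}+1$ children, and every internal vertex at depth below $\delta$ has at least $n^{1/\delta}$ children, since it has at least $n^{1/\delta}+1$ neighbours and one of them is its parent.

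The number of vertices at depth exactly $\delta$ is therefore strictly greater than $(n^{1/\delta})^{\delta}=n$. The graph has only $n$ vertices, which is a contradiction, so $|E_H| \le 2n^{1+1/\delta}$.

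\textbf{Main obstacle.} The only delicate step is the tree claim inside the ball. The bound $2\delta$ is exactly the length of the longest cycle that could close through the ball: two paths of length at most $\delta$ meeting at a common vertex, or joined by a non-tree edge, give a closed walk of length at most $2\delta+1$. The $+1$ case cannot arise within depth $\delta$ counting, because the vertices at depth exactly $\delta$ are only counted, not expanded. The remaining floor and ceiling bookkeeping in the degree threshold is routine, and the constant $2$ gives enough slack to absorb it.
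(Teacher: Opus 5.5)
The paper gives no proof of this observation and simply defers to \cite{althofer1993spars}. Your route (prune to large minimum degree, then count breadth-first layers around a vertex) is the standard Moore-bound argument, and it goes through after one correction and one piece of bookkeeping.

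\textbf{The tree claim is false as stated.} The ball of radius $\delta$ around $r$ does not in general induce a tree. Two vertices at depth exactly $\delta$ may be adjacent, closing a cycle of length $2\delta+1$, which the girth hypothesis allows; in $H=C_{2\delta+1}$ the ball is the whole cycle. So your justification that a non-tree edge inside the ball closes a cycle of length at most $2\delta$ fails for edges inside the last layer. Likewise, the sentence saying the $+1$ case ``cannot arise'' should say it can arise but does not affect the count. What you actually need concerns the layers $L_i$ of vertices at distance $i$ from $r$, and it is true:
\begin{itemize}
\item Every vertex of $L_i$ with $1\le i\le\delta$ has exactly one neighbour in $L_{i-1}$, since two such neighbours give two shortest paths and hence a cycle of length at most $2i\le 2\delta$.
\item A vertex of $L_j$ with $j<\delta$ has no neighbour in $L_j$, since that gives an odd closed walk, hence a cycle, of length at most $2j+1\le 2\delta-1$.
\end{itemize}
Hence each vertex of $L_j$, for $1\le j<\delta$, has at least $d_{\min}-1$ neighbours in $L_{j+1}$. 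These are disjoint across different vertices of $L_j$ by uniqueness of parents. This gives $|L_1|\ge d_{\min}$ and $|L_{j+1}|\ge (d_{\min}-1)|L_j|$, which is exactly the count you use.

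\textbf{Make the degree threshold explicit, since this is where the factor $2$ is spent.} Deleting vertices of degree at most $n^{1/\delta}$ only guarantees $d_{\min}\ge\lfloor n^{1/\delta}\rfloor+1$, and then $d_{\min}(d_{\min}-1)^{\delta-1}$ need not exceed $n$. Instead, delete vertices of degree less than $n^{1/\delta}+1$. Each deletion removes fewer than $n^{1/\delta}+1$ edges, so in total fewer than $n^{1+1/\delta}+n\le 2n^{1+1/\delta}<|E_H|$ edges are removed, using $n^{1/\delta}\ge 1$. The surviving nonempty subgraph then has $d_{\min}\ge n^{1/\delta}+1$, so $|L_\delta|\ge (n^{1/\delta}+1)\,n^{(\delta-1)/\delta}>n$, which is the desired contradiction.
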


\begin{proof}[Proof of \Cref{thm:det_spanner}]
First note that by using weight buckets $[2^i, 2^{i+1})$, we can reduce an online algorithm for unweighted graphs $G$ to an online algorithm for weighted graphs with weights in $[1, W]$ with just an additional overhead of $\log W$ in its run-time, memory-usage, and number of edges. 

Thus, it remains to analyze the online spanner for unweighted graphs that is described in \Cref{algo:simple_spanner}.

We start with correctness. That the algorithm is online is clear from its description. Moreover, because \Cref{thm:det_apsp} is deterministic, so is \Cref{algo:simple_spanner}. In order to bound the stretch $H$, it suffices to show that for every edge $(u, v) \in E(G)$, we have that 
$$\dist_H(u, v) \leq 2 \gamma \log n \cdot \dist_G(u, v).$$ But this condition must be satisfied at all times, because if at the time of insertion it hadn't been satisfied, then the edge $e$ would have been inserted into $H$ in Line $6$ of \Cref{algo:simple_spanner:if}. Note that as at any time we have $\dist_G(u, v) \geq \mathcal{D}_H.\textsc{ApxDist}(u, v) / \gamma$, it must be that $H$ has girth at least $2 \log n + 1$ at all times. Consequently, from \Cref{obs:girth}, it follows that $|E(H)| \leq 2n$. 

The runtime and memory bound follows because the algorithm calls $\mathcal{D}_H.\textsc{ApxDist}()$ exactly $m$ times, and calls $\mathcal{D}_H.\textsc{AddEdge}()$ at most $2n$ many times. The guarantees then follow from \Cref{thm:det_apsp}.
\end{proof}

\end{document}